\newcommand{\PreserveBackslash}[1]{\let\temp=\\#1\let\\=\temp}
\newcolumntype{C}[1]{>{\PreserveBackslash\centering}p{#1}}
\newcolumntype{R}[1]{>{\PreserveBackslash\raggedleft}p{#1}}
\newcolumntype{L}[1]{>{\PreserveBackslash\raggedright}p{#1}}
\newtheorem{lemma}{Lemma}
\newtheorem{proof}{Proof}
\newtheorem{theorem}{Theorem}
\begin{document}

\title{\Large Regularized Zero-Forcing Precoding Aided Adaptive Coding and Modulation
 for Large-Scale Antenna Array Based Air-to-Air Communications}

\author{Jiankang~Zhang,~\IEEEmembership{Senior Member,~IEEE},
 Sheng~Chen,~\IEEEmembership{Fellow,~IEEE},
 Robert~G.~Maunder,~\IEEEmembership{Senior~Member,~IEEE},
 Rong~Zhang,~\IEEEmembership{Senior~Member,~IEEE},
 Lajos~Hanzo,~\IEEEmembership{Fellow,~IEEE}
\thanks{The authors are with School of Electronics and Computer Science, University
 of Southampton, U.K. (E-mails: \{jz09v, sqc, rm, rz, lh\}@ecs.soton.ac.uk. S. Chen
 is also with King Abdulaziz University, Jeddah, Saudi Arabia.} %
\thanks{The financial support of the European Research Council's Advanced Fellow 
 Grant and of the Royal Society Wolfson Research Merit Award as well as of the EPSRC project
 EP/N004558/1 are gratefully acknowledged. The research
data for this paper is available at https://doi.org/10.5258/SOTON/D0592.} %
\vspace*{-15mm}
}

\maketitle

\IEEEpeerreviewmaketitle

\begin{abstract}
 We propose a regularized zero-forcing transmit precoding (RZF-TPC) aided and
 distance-based adaptive coding and modulation (ACM) scheme to support aeronautical
 communication applications, by exploiting the high spectral efficiency of
 large-scale antenna arrays and link adaption. Our RZF-TPC aided and distance-based
 ACM scheme switches its mode according to the distance between the communicating
 aircraft. We derive the closed-form asymptotic signal-to-interference-plus-noise
 ratio (SINR) expression of the RZF-TPC for the aeronautical channel, which is
 Rician, relying on a non-centered channel matrix that is dominated by the
 deterministic line-of-sight component. The effects of both realistic channel
 estimation errors and of the co-channel interference are considered in the
 derivation of this approximate closed-form SINR formula. Furthermore, we derive
 the analytical expression of the optimal regularization parameter that minimizes
 the mean square detection error. The achievable throughput expression based on our 
 asymptotic approximate SINR formula is then utilized as the design metric for the
 proposed RZF-TPC aided and distance-based ACM scheme. Monte-Carlo simulation
 results are presented for validating our theoretical analysis as well as for
 investigating the impact of the key system parameters. The simulation results
 closely match  the theoretical results. In the specific example that two
 communicating aircraft fly at a typical cruising speed of 920\,km/h, heading in
 opposite direction over the distance up to 740\,km taking a period of about 24
 minutes, the RZF-TPC aided and distance-based ACM is capable of transmitting a
 total of 77 Gigabyte of data with the aid of 64 transmit antennas and 4 receive
 antennas, which is significantly higher than that of our previous
 eigen-beamforming transmit precoding aided and distance-based ACM benchmark.
\end{abstract}

\begin{IEEEkeywords}
 Aeronautical communication, Rician channel, large-scale antenna array, adaptive coding
 and modulation, transmit precoding, regularized zero-forcing precoding
\end{IEEEkeywords}

\section{Introduction}\label{S1}

 The vision of the `smart sky' \cite{zhang2017survey} in support of air traffic control
 and the `Internet above the clouds' \cite{jahn2003evolution} for in-flight entertainment
 has motivated researchers to develop new solutions for aeronautical communications. The
 aeronautical {\it{ad hoc}} network (AANET) \cite{vey2014aeronautical} exchanges
 information using multi-hop air-to-air radio communication links, which is capable of
 substantially extending the coverage range over the oceanic and remote airspace, without
 any additional infrastructure and without relying on satellites. However, the
 existing air-to-air communication solutions can only provide limited data rates.
 Explicitly, the planed L-band digital aeronautical communications system (L-DACS)
 \cite{schnell2014ldacs,jain2011analysis} only provides upto 1.37\,Mbps air-to-ground
 communication rate, and the aeronautical mobile airport communication system
 \cite{bartoli2013aeromacs} only offers 9.2 Mbps air-to-ground communication rate in the
 vicinity of the airport. Finally, the L-DACS air-to-air mode \cite{graupl2011ldacs1} is
 only capable of providing 273\,kbps net user rate for direct air-to-air communication,
 which cannot meet the high-rate demands of the emerging aeronautical applications.

 The existing aeronautical communication systems mainly operate in the very high
 frequency band spanning from 118\,MHz to 137\,MHz \cite{Haind2007Anindependent},
 and there are no substantial idle frequency slots for developing broadband commercial
 aeronautical communications. Moreover, the ultra high frequency band has almost
 been fully occupied by television broadcasting, cell phones and satellite
 communications \cite{zhang2017survey,stacey2008aeronautical}. However, there are many
 unlicensed-frequencies in the super high frequency (SHF) band spanning from 3\,GHz to
 30\,GHz, which may be explored for the sake of developing broadband commercial
 aeronautical communications. Explicitly, the wavelength spans from 1\,cm to 10\,cm for
 the SHF band, which results in 0.5\,cm\,$\sim$\,5\,cm antenna spacing by utilizing the
 half-wavelength criterion for designing the antenna array. This antenna spacing is
 capable of accommodating a large-scale antenna array on commercial aircraft, which 
 offers dramatic  throughput and energy efficiency benefits \cite{larsson2014massive}.
 To provide a high throughput and a high spectral efficiency (SE) for commercial
 air-to-air applications, we propose a large-scale antenna array aided adaptive coding
 and modulation (ACM) based solution in the SHF band. 

 As an efficient link adaptation technique, ACM \cite{goldsmith1998adaptive,hanzo2002adaptive}
 adaptively matches the modulation and coding modes to the conditions of the propagation
 link, which is capable of enhancing the link reliability and maximizing the throughput.
 The traditional ACM relies on the instantaneous signal-to-noise ratio (SNR) or
 signal-to-interference-plus-noise ratio (SINR) to switch the ACM modes, which requires
 the acquisition of the instantaneous channel state information (CSI). Naturally, channel
 estimation errors are unavoidable in practice, especially at aircraft velocities 
 \cite{zhou2004adaptive}. Furthermore, the CSI-feedback based ACM solution may
 potentially introduce feedback errors and delays \cite{zhou2004accurate}. Intensive
 investigations have been invested in robust ACM, relying on partial CSI
 \cite{zhou2004adaptive} and imperfect CSI \cite{taki2014adaptive}, or  exploiting
 non-coherent detection for dispensing with channel estimation all together
 \cite{hanzo2004quadrature}. However, all these ACM solutions are designed for terrestrial
 wireless communications and they have to frequently calculate the SINR and to promptly
 change the ACM modes, which imposes heavy mode-signaling   overhead. Therefore, for
 air-to-air communications, these ACM designs may become impractical.

 Unlike terrestrial channels, which typically exhibit Rayleigh characteristics,
 aeronautical communication channels exhibit strong line-of-sight (LOS) propagation
 characteristics \cite{haas2002aeronautical,meng2011measurements}, and at cruising
 altitudes, the LOS component dominates the reflected components. Furthermore, the
 passenger planes typically fly across large-scale geographical distances, and the received
 signal strength is primarily determined by the pathloss, which is a function of
 communication distance. In \cite{zhang2017adaptive}, we proposed an eigen-beamforming
 transmit precoding (EB-TPC) aided and distance-based ACM solution for air-to-air
 aeronautical communication by exploiting the aeronautical channel characteristics.
 EB-TPC has the advantage of low-complexity operation by simply conjugating the channel
 matrix, and it also enables us to derive the closed-form expression of the attainable
 throughput, which facilitates the design of the distance-based ACM
 \cite{zhang2017adaptive}. However, its achievable  throughput is far from optimal,
 since EB-TPC does not actively suppress the inter-antenna interference. Zero-forcing
 transmit precoding (ZF-TPC) \cite{wiesel2008zero} by contrast is capable of mitigating
 the inter-antenna interference, but it is challenging to provide a closed-form
 expression for the achievable throughput, particularly for large-scale antenna array
 based systems.  Tataria {\it et al.} \cite{tataria2017zero} investigated the distribution
 of the instantaneous per-terminal SNR for the ZF-TPC aided multi-user system and
 approximated it as a gamma distribution. Additionally, ZF-TPC also surfers from rate
 degradation in ill-conditioned channels. By introducing regularization, the regularized
 ZF-TPC (RZF-TPC) \cite{peel2005avector} is capable of mitigating the ill-conditioning
 problem by beneficially  balancing the interference cancellation and the noise
 enhancement \cite{zhang2013large}. Furthermore, owing to the regularization, it becomes
 possible to analyze the achievable throughput for the Rayleigh fading channel. Hoydis
 {\it et al.} \cite{hoydis2013massive} used the RZF-TPC as the benchmark to study how
 many extra antennas are needed for the EB-TPC in the context of Rayleigh fading channels. 

 However, the Rician fading channel experienced in aeronautical communications, which
 has a non-centered channel matrix due to the presence of the deterministic LOS
 component, is different from the centered Rayleigh fading channel. This imposes a
 challenge on deriving a closed-form formula of the achievable throughput, which is a
 fundamental metric of designing ACM solutions. Few researches have tackled this
 challenge. Nonetheless, recently three conference papers
 \cite{tataria2016performance,falconet2016asymptotic,Sanguinetti2017asymptotic} have 
 investigated the asymptotic sum-rate of the RZF-TPC in Rician channels. Explicitly,
  Tataria {\it et al.} \cite{tataria2016performance} investigated the ergodic sum-rate of
 the RZF-TPC aided single-cell system under the idealistic condition  of uncorrelated
 Rician channel and the idealistic assumption of perfect channel knowledge. Falconet
 {\it et al.} \cite{falconet2016asymptotic} provided an asymptotic sum-rate expression for RZF-TPC in a single-cell scenario by assuming identical fading-correlation for
 all the users. Sanguinetti {\it et al.} \cite{Sanguinetti2017asymptotic} extended this 
 work from the single-cell to the coordinated multi-cell scenario under the same
 assumption. But crucially, the authors of \cite{Sanguinetti2017asymptotic} did not
 consider the pilot contamination imposed by adjacent cells during the uplink channel
 estimation \cite{zhang2014pilot,guo2016optimal}. Moreover, the study
 \cite{Sanguinetti2017asymptotic} assumed Rician fading only within the serving cell,
 while the interfering signals arriving from adjacent cells were still assumed to suffer
 from Rayleigh fading. This assumption has limited validity in aeronautical communications.
 Most critically, the asymptotic sum-rates provided in \cite{falconet2016asymptotic} and
 \cite{Sanguinetti2017asymptotic} were based on the assumption that both the number
 of antennas and the number of served users tend to infinity. The essence of
 the `massive' antenna array systems is that of serving a small number of users on the same
 resource block using linear signal processing by employing a large number
 of antenna elements. Assuming that the number of users on a resource block tends to
 infinity has no physical foundation at all.

 Against this background, this paper designs an RZF-TPC scheme for  large-scale antenna
 array assisted and distance-based ACM aided aeronautical communications, which offers
 an appealing solution for supporting the emerging Internet above the clouds. Our main
 contributions are:
\begin{enumerate}
\item We derive the closed-form expression of the achievable throughput for the RZF-TPC
 in the challenging new context of aeronautical communications. Our previous contribution work relying on EB-TPC  \cite{zhang2017adaptive} invoked relatively simple analysis, since it did not involve the non-centered channel matrix inverse. By contrast, the derivation of the closed-form throughput of our new RZF-TPC has to tackle the associated non-centered matrix inverse problem. Moreover, in contrast to the EB-TPC, the  regularization parameter of the RZF-TPC has to be optimized for maximizing the throughput. In this paper,  
 we derive the closed-form asymptotic approximation of the SINR for the RZF-TPC in the
 presence of both realistic channel estimation errors and co-channel interference
 imposed by the aircraft operating in the same frequency band.  We also provide the associated detailed proof.  Moreover, we explicitly
 derive the optimal analytical regularization parameter that minimizes the mean square
 detection error. Given this asymptotic approximation of the SINR, the fundamental metric
 of the achievable throughput as the function of the communication distance is provided
 for designing the distance-based ACM.

\item We develop the new RZF-TPC aided and distance-based ACM design for the application
 to the large antenna array assisted aeronautical communication in the presence of
 imperfect CSI and co-channel interference,  first considered in \cite{zhang2017adaptive}.
  Like our previous EB-TPC aided and distance-based ACM scheme \cite{zhang2017adaptive},
 the RZF-TPC aided and distance-based ACM scheme switches its ACM mode based on the
 distance between the communicating aircraft pair. However, the RZF-TPC is much more
 powerful, and the proposed design offers significantly higher SE over the previous
 EB-TPC aided and distance-based ACM design.  Specifically, the new design achieves up
 to 3.0 bps/Hz and 3.5 bps/Hz SE gains with the aid of 32 transmit antennas/4 receive
 antennas and 64 transmit antennas/4 receive antennas, respectively, over our previous
 design.
\end{enumerate}

\section{System Model}\label{S2}

 We consider an air-to-air communication scenario at cruising altitude. Our
 proposed time division duplex (TDD) based aeronautical communication system is
 illustrated Fig.~\ref{FIG1}. In the communication zone considered, aircraft $a^*$ 
 transmits its data to aircraft $b^*$, while aircraft $a$, $a = 1,2,\cdots,A$ are
 the interfering  aircraft using the same frequency as aircraft $a^*$ and $b^*$. The
 aeronautical communication system operates in the SHF band and we assume  that the
 carrier frequency is 5\,GHz, which results in a wave-length of 6\,cm. Thus, it is 
 practical to accommodate a large-scale high-gain antenna array on the aircraft for
 achieving high SE. We assume furthermore that all the aircraft are equipped with the
 same large-scale antenna array. Specifically, each aircraft has $N_{\rm total}$
 antennas, which transmit and receive signals on the same frequency. Explicitly, each
 aircraft utilizes $N_t$ ($< N_{\rm total}$) antennas, denoted as data-transmitting
 antennas (DTAs), for transmitting data and utilizes $N_r$ antennas, denoted as
 data-receiving antennas (DRAs), for receiving data. In line with  the maximum
 attainable spatial degrees of freedom, generally, we have $N_r < N_t$.  Furthermore, the
 system adopts orthogonal frequency-division multiplexing (OFDM) for improving the SE and
 the TDD protocol for reducing the latency imposed by channel information feedback.
 Each aircraft has a distance measuring equipment (DME), e.g., radar, which is
 capable of measuring the distance to nearby aircraft. Alternatively, the GPS system 
 may be utilized to provide the distance information required.

\begin{figure*}[tp!]
\vspace*{-1mm}
\begin{center}
 \includegraphics[width=0.65\textwidth]{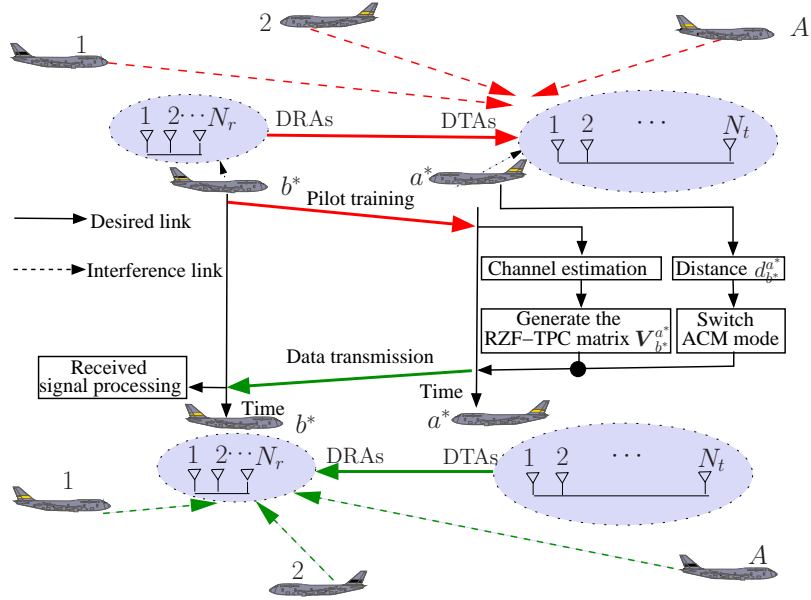}
\end{center}
\vspace*{-10mm}
\caption{The proposed aeronautical communication system employing the RZF-TPC aided and
 distance-based ACM scheme, where aircraft $a^*$ is transmitting data to aircraft $b^*$
 in the presence of co-channel interference.}
\label{FIG1}
\vspace*{-6mm}
\end{figure*}

\subsection{Channel State Information Acquisition}\label{S2.1}

 In order to transmit data from $a^*$ to $b^*$, aircraft $a^*$ needs the CSI linking
 $a^*$ to aircraft $b^*$. Aircraft $a^*$ estimates the reverse channel based on the
 pilots sent by $b^*$, and then exploits the channel's reciprocity of TDD protocol to
 acquire the required CSI. Explicitly, this pilot training phase is shown at the top
 of Fig.~\ref{FIG1}, where $a^*$ estimates the channel between the $N_r$ DRAs of
 $b^*$ and its $N_t$ DTAs based on the pilots sent by $b^*$ in the presence of the
 interference imposed by the aircraft $a$, $a=1,2,\cdots,A$. We consider the
 worst-case scenario, where the interfering aircraft $a$ also transmits the same pilot
 symbols as $b^*$, which results in the most serious co-channel interference. Since the
 length of the cyclic prefix (CP) $N_{\rm cp}$ is longer than the channel length $P$,
 inter-symbol interference is completely eliminated, and the receiver can process the
 signals on a subcarrier-by-subcarrier basis. Thus, the frequency-domain (FD) signal
 vector of $a^{*}$, $\widetilde{\bm{Y}}_{a^*}=\big[\widetilde{Y}^{a^*}_1 ~
 \widetilde{Y}^{a^*}_2 \cdots \widetilde{Y}^{a^*}_{N_t}\big]^{\rm T}\in\mathbb{C}^{N_t}$,
 received during the pilot training can be written as 
\begin{align}\label{eq1}
 & \widetilde{\bm{Y}}_{a^*}= \sqrt{P_{r,a^*}^{b^*}} \bm{H}_{a^*}^{b^*}
  \widetilde{\bm{X}}^{b^*}  + \sum\limits_{a=1}^A \sqrt{P_{r,a^*}^a} \bm{H}_{a^*}^a
  \widetilde{\bm{X}}^a + \widetilde{\bm{W}}_{a^*} ,
\end{align}
 where $\widetilde{\bm{X}}^{b^*}=\big[\widetilde{X}^{b^*}_1 ~ \widetilde{X}^{b^*}_2
 \cdots \widetilde{X}^{b^*}_{N_r}\big]^{\rm T}\in \mathbb{C}^{N_r}$ is the pilot
 symbol vector transmitted by $b^*$, which obeys the complex Gaussian distribution
 with the mean vector of the $N_r$-dimensional zero vector $\bm{0}_{N_r}$ and the
 covariance matrix of the $N_r\times N_r$ identity matrix $\bm{I}_{N_r}$, denoted by
 $\widetilde{\bm{X}}^{b^*}\sim\mathcal{CN}\left(\bm{0}_{N_r},\bm{I}_{N_r}\right)$, and
 $\bm{H}_{a^*}^{a'}\in\mathbb{C}^{N_t\times N_r}$ denotes the FD channel transfer
 function coefficient matrix linking the $N_r$ DRAs of $a'$ to the $N_t$ DTAs of $a^*$,
 for $a'=b^*,a$, while $\widetilde{\bm{W}}_{a^*}\sim \mathcal{CN}\left(\bm{0}_{N_t},
 \sigma_w^2\bm{I}_{N_t}\right)$ is the FD additive white Gaussian noise (AWGN) vector,
 and $P_{r,a^*}^{b^*}$ and $P_{r,a^*}^a$ represent the received powers at a single DTA
 of $a^*$ for the signals transmitted from $b^*$ and $a$, respectively. Moreover, since
 the worst-case scenario is considered, aircraft $a$ uses the same pilot symbol as $b^*$,
 and we have $\widetilde{\bm{X}}^a=\widetilde{\bm{X}}^{b^*}$ for $1\le a\le A$. 
 
 Typically, the aeronautical channel consists of a strong LOS path and a cluster of
 reflected/delayed paths \cite{haas2002aeronautical,bello1973aeronautical,walter2011the}.
 Hence, the channel is Rician, and $\bm{H}_{a^*}^{b^*}\in \mathbb{C}^{N_t\times N_r}$ is
 given by 
\begin{align}\label{eq2}
 \bm{H}_{a^*}^{b^*} =& \nu \bm{H}_{\text{d},a^*}^{b^*} +
  \varsigma \bm{H}_{\text{r},a^*}^{b^*} ,
\end{align}
 where $\bm{H}_{{\rm d},a^*}^{b^*}\in \mathbb{C}^{N_t\times N_r}$ and $\bm{H}_{{\rm r},a^*}^{b^*}
 \in \mathbb{C}^{N_t\times N_r}$ are the deterministic and scattered channel components,
 respectively, while $\nu =\sqrt{\frac{K_{\rm Rice}}{K_{\rm Rice}+1}}$ and $\varsigma =
 \sqrt{1-\nu}$, in which $K_{\rm Rice}$ is the Rician $K$-factor of the channel.
  When aircraft are at cruising altitude, the deterministic LOS component dominates, and the
 scattered component is very weak which may come from the reflections from other distant
 aircraft or tall mountains. Note that when an aircraft is at cruising altitude, there
 is no local scatters at all, because a minimum safe distance is enforced among aircraft,
 and there exists no shadowing effect either. For an aircraft near airport space for
 landing/takeoff, the scattering component is much stronger than at cruising, but the LOS
 component still dominates. The scattering component in this case includes reflections
 from ground, and shadowing effect has to be considered. The scattered component
 $\bm{H}_{\text{r},a^*}^{b^*}$ can be expressed as \cite{kim2010spatial}
\begin{align}\label{eq3}
 \bm{H}_{\text{r},a^*}^{b^*} =& \bm{R}_{a^*}^{\frac{1}{2}} \bm{G}_{a^*}^{b^*}
 \big(\bm{R}^{b^*}\big)^{\frac{1}{2}} ,
\end{align}
 where $\bm{R}^{b^*}\in \mathbb{C}^{N_r\times N_r}$ and $\bm{R}_{a^*}\in
 \mathbb{C}^{N_t\times N_t}$ are the spatial correlation matrices for the $N_r$
 antennas of $b^*$ and the $N_t$ antennas of $a^*$, respectively, while the elements
 of $\bm{G}_{a^*}^{b^*}\in \mathbb{C}^{N_t\times N_r}$ follow the independently
 identically distributed distribution $\mathcal{CN}(0,1)$. Thus, $\mathcal{E}\left\{
 \bm{vec}\left(\bm{H}_{{\rm r},a^*}^{b^*}\right)\right\}=\bm{0}_{N_tN_r}$, where 
 $\mathcal{E}\{\cdot\}$ is the expectation operator and $\bm{vec}(\bm{H})$ denotes
 the column stacking operation applied to $\bm{H}$, while the covariance matrix
 $\bm{R}_{{\rm r},a^*}^{b^*}=\mathcal{E}\left\{\bm{vec}\left(\bm{H}_{{\rm r},a^*}^{b^*}\right)
 \bm{vec}^{\rm H}\left(\bm{H}_{{\rm r},a^*}^{b^*}\right)\right\}\in \mathbb{C}^{N_tN_r
 \times N_tN_r}$ is given by $\bm{R}_{{\rm r},a^*}^{b^*}=\bm{R}^{b^*}\otimes \bm{R}_{a^*}$,
 in which $\otimes$ is the Kronecker product. Since all the aircraft are assumed to be equipped
 with the same antenna array, we will assume that all the $\bm{R}_{a_t}$, $\forall a_t
 \in \mathcal{A}=\{1,2,\cdots ,A,a^*,b^*\}$, are equal\footnote{ The local scattering
 in the aeronautical channel is not as rich as in the terrestrial channel
 \cite{tataria2017impact}, and the difference in the local scatterings amongst
 different aircraft may be omitted. Furthermore, at the cruising altitude, there exists
 no local scattering at all. However, even though it is reasonable to assume that all
 jumbo jets are equipped with identical antenna arrays, the geometric shapes of different
 types of jumbo jets are slightly different, and thus $\bm{R}_{a_t}=\bar{\bm{R}}_t$,
 $\forall a_t\in\mathcal{A}$ only holds approximately.}, i.e., we have $\bm{R}_{a_t}=\bar{\bm{R}}_t$,
 $\forall a_t\in\mathcal{A}$, and all the $\bm{R}^{a_r}$ are equal, namely, $\bm{R}^{a_r}
 =\bar{\bm{R}}^r$, $\forall a_r\in\mathcal{A}$. Hence, all the covariance matrices are
 equal, and they can be expressed as
\begin{align}\label{eq4}
 \bm{R}_{\text{r},a_t}^{a_r} =& \bar{\bm{R}}_{\text{r},t}^{r} = \bar{\bm{R}}^r \otimes
  \bar{\bm{R}}_t, \, \forall a_t,a_r\in \mathcal{A} \text{ and } a_t\neq a_r .
\end{align}
 Note that in practice, $N_r \ll N_t$ and, therefore, the DRAs can always be spaced
 sufficiently apart so that they become uncorrelated. Consequently, we have
 $\bar{\bm{R}}^r=\bm{I}_{N_r}$.

 According to \cite{parsons2000themobile}, the received power $P_{r,a^*}^{b^*}$ at
 a single DTA antenna of aircraft $a^*$ is related to the transmitted signal power
 $P_t^{b^*}$ at a single DRA antenna of $b^*$ by
\begin{equation}\label{eq5}
 P_{r,a^*}^{b^*} = P_t^{b^*} 10^{-0.1 L_{{\rm path loss},a^*}^{b^*}} .
\end{equation}
 Since we mainly consider air-to-air transmissions, there exists no shadowing, and 
 the pathloss model can be expressed as \cite{parsons2000themobile}
\begin{equation}\label{eq6}
 L_{\text{path loss},a^*}^{b^*} \,[\text{dB}] = -154.06 + 20\log_{10}\left(f\right) + 20\log_{10}\left(d\right) ,
\end{equation}
 where $f$ [Hz] is the carrier frequency and $d$ [m] is the distance between the
 communicating aircraft pair. For the received interference signal power $P_{r,a^*}^a$,
 we have a similar pathloss model.  For air-to-ground communication near airport space,
 it may need to consider shadowing effect, and the shadow fading standard deviation in
 dB should be added to the pathloss model \cite{gligorevic2013measurements}.
 
 The minimum mean square error (MMSE) estimate $\widehat{\bm{H}}_{a^*}^{b^*}$ of
 $\bm{H}_{a^*}^{b^*}$ is given by \cite{kay2003fundamentals}
\begin{align}\label{eq7}
  \bm{vec}\left(\widehat{\bm{H}}_{a^*}^{b^*}\right) =& \bm{vec}\left(\nu\bm{H}_{\text{d},a^*}^{b^*}\right)
  + \varsigma^2\bar{\bm{R}}_{\text{r},t}^r \left(\frac{\sigma_w^2}{P_{r,a^*}^{b^*}} \bm{I}_{N_r N_t}
  + \varsigma^2\bar{\bm{R}}_{\text{r},t}^r
  + \sum\limits_{a=1}^A \frac{P_{r,a^*}^a}{P_{r,a^*}^{b^*}} \varsigma^2 \bar{\bm{R}}_{\text{r},t}^r \right)^{-1} 
  \nonumber \\ & \hspace*{-10mm}\times \bigg( \bm{vec}\left(\varsigma \bm{H}_{\text{r},a^*}^{b^*}\right)
  + \sum\limits_{a=1}^A\sqrt{\frac{P_{r,a^*}^a}{P_{r,a^*}^{b^*}}}
  \bm{vec}\left(\varsigma\bm{H}_{\text{r},a^*}^a\right) 
  + \frac{1}{\sqrt{P_{r,a^*}^{b^*}}} \bm{vec}\left(\widetilde{\bar{\bm{W}}}_{a^*}
  \Big(\widetilde{\bar{\bm{X}}}^{b^*}\Big)^{\rm H} \right) \bigg) \! ,
\end{align}
 where $\widetilde{\bar{\bm{X}}}^{b^*}\in \mathbb{C}^{N_r\times N_r}$ consists of the $N_r$
 consecutive pilot symbols with $\widetilde{\bar{\bm{X}}}^{b^*}
 \big(\widetilde{\bar{\bm{X}}}^{b^*}\big)^{\rm H}=\bm{I}_{N_r}$, and
 $\widetilde{\bar{\bm{W}}}_{a^*}\in \mathbb{C}^{N_r\times N_r}$ is the corresponding
 AWGN matrix over the $N_r$ consecutive OFDM symbols. Explicitly, the distribution of
 the MMSE estimator (\ref{eq7}) is \cite{kay2003fundamentals}
\begin{align}\label{eq8}
 \bm{vec}\left(\widehat{\bm{H}}_{a^*}^{b^*}\right) \sim \mathcal{CN}\left(
 \bm{vec}\left(\nu\bm{H}_{\text{d},a^{*}}^{b^*}\right),\bm{\Phi}_{a^{*}}^{b^{*}}\right) ,
\end{align}
 whose covariance matrix $\bm{\Phi}_{a^{*}}^{b^{*}}\in\mathbb{C}^{N_{t}N_{r}\times N_{t}N_{r}}$
 is given by 
\begin{align}\label{eq9}
 \bm{\Phi}_{a^*}^{b^*} =& \varsigma^2 \bar{\bm{R}}_{\text{r},t}^r \left(
  \frac{\sigma_w^2}{P_{r,a^{*}}^{b^{*}}} \bm{I}_{N_r N_t} \! +\! \varsigma^2 \bar{\bm{R}}_{\text{r},t}^r
  \! + \! \sum\limits_{a=1}^A \frac{P_{r,a^{*}}^a}{P_{r,a^{*}}^{b^{*}}} \varsigma^2
  \bar{\bm{R}}_{\text{r},t}^r\right)^{-1}   \varsigma^2 \bar{\bm{R}}_{\text{r},t}^r .
\end{align}
 By defining $\bm{\Phi}_{a^*}=\mathcal{E}\left\{\widehat{\bm{H}}_{{\rm r},a^*}^{b^*}
 \left(\widehat{\bm{H}}_{{\rm r},a^*}^{b^*}\right)^{\rm H}\right\}\in \mathbb{C}^{N_t\times N_t}$
 and $\bm{\Phi}^{b^*}=\mathcal{E}\left\{\left(\widehat{\bm{H}}_{{\rm r},a^*}^{b^*}\right)^{\rm H}
 \widehat{\bm{H}}_{{\rm r},a^*}^{b^*}\right\}\in \mathbb{C}^{N_r\times N_r}$, where
 $\widehat{\bm{H}}_{{\rm r},a^*}^{b^*}$ denotes the estimate of $\bm{H}_{{\rm r},b^*}^{a^*}$,
 $\bm{\Phi}_{a^*}^{b^*}$ can be expressed as
\begin{align}\label{eq10}
 \bm{\Phi}_{a^*}^{b^*} =& \bm{\Phi}_{a^*} \otimes \bm{\Phi}^{b^*} .
\end{align}
 According to Lemma~1 of \cite{fernandes2013inter}, $\bm{\Phi}^{b^*}\to \bm{I}_{N_r}$ as
 $N_t\to \infty$. Since $N_t$ is large, we have $\bm{\Phi}^{b^*}\approx \bm{I}_{N_r}$. Hence,
 given $\bm{\Phi}_{a^*}^{b^*}$, $\bm{\Phi}_{a^*}$ is uniquely determined.  It is well known
 that the computational complexity of this optimal MMSE channel estimator is on the order
 of $\textsf{O}\big(N_r^3N_t^3\big)$.

\subsection{Data Transmission}\label{S2.2}

 During the data transmission, $a^*$ transmits the data vector $\bm{X}^{a^*}=
 \big[ X_1^{a^*} ~ X_2^{a^*}\cdots X_{N_r}^{a^*}\big]^{\rm T}\in\mathbb{C}^{N_r}$
 using its $N_t$ DTAs to the $N_r$ DRAs of $b^*$, in the presence of the co-channel
 interference imposed by other aircraft, as shown at the bottom of Fig.~\ref{FIG1}.
 Owing to the TDD channel reciprocity, the channel $\bm{H}_{b^*}^{a^*}\in
 \mathbb{C}^{N_r\times N_t}$ encountered by transmitting $\bm{X}^{a^*}$ is
 $\bm{H}_{b^*}^{a^*}=\left(\bm{H}_{a^*}^{b^*}\right)^{\rm H}$ and its estimate is
 given by $\widehat{\bm{H}}_{b^*}^{a^*}=\left(\widehat{\bm{H}}_{a^*}^{b^*}\right)^{\rm H}$,
 which is used for designing the transmit precoding (TPC) for mitigating the
 inter-antenna interference (IAI). We adopt the powerful RZF-TPC whose TPC matrix
 $\bm{V}_{b^*}^{a^*}\in\mathbb{C}^{N_t\times N_r}$ is given by 
\begin{align}\label{eq11}
 \bm{V}_{b^*}^{a^*} =& \bm{\Upsilon}_{b^*}^{a^*}(\widehat{\bm{H}}_{b^*}^{a^*})^{\rm H} ,
\end{align}
 with
\begin{align}\label{eq12}
 \bm{\Upsilon}_{b^*}^{a^*} =& \left(\frac{1}{N_{t}}(\widehat{\bm{H}}_{b^*}^{a^*})^{\rm H}
  \widehat{\bm{H}}_{b^*}^{a^*} + \xi_{b^*}^{a^*}\bm{I}_{N_{t}}\right)^{-1} ,
\end{align}
 where $\xi_{b^*}^{a^*} > 0$ is the regularization parameter.  It can be seen that the 
 complexity of calculating the TPC matrix for the RZF-TPC scheme is on the order of
 $\textsf{O}\big(N_t^3\big)$. Given $\bm{V}_{b^*}^{a^*}$, the received signal vector
 $\bm{Y}_{b^*}\in \mathbb{C}^{N_r}$ of aircraft $b^*$ can be written as
\begin{align}\label{eq13}
 \bm{Y}_{b^*} =& \sqrt{P_{r,b^*}^{a^*}}\bm{H}_{b^*}^{a^*}\bm{V}_{b^*}^{a^*}
  \bm{X}^{a^*} \! +\! \sum\limits_{a=1}^A \sqrt{P_{r,b^*}^a}\bm{H}_{b^*}^a\bm{V}_{b^a}^a \bm{X}^a
  \! +\! \bm{W}_{b^*} ,
\end{align}
 where aircraft $a$ uses the RZF-TPC matrix $\bm{V}_{b^a}^a\in \mathbb{C}^{N_t\times N_r}$ to
 transmit the data vector $\bm{X}^a=\big[X_1^a ~ X_2^a \cdots X_{N_r}^a\big]^{\rm T}$ to its
 desired receiving aircraft $b^a$ for $1\le a\le A$, $b^a\ne b^*$ and $b^a\neq a$, and hence
 $\sqrt{P_{r,b^*}^a}\bm{H}_{b^*}^a\bm{V}_{b^a}^a \bm{X}^a$ is the interference imposed by $a$,
 while the AWGN vector $\bm{W}_{b^*}=\big[W_1^{b^*} ~ W_2^{b^*} \cdots W_{N_r}^{b^*}\big]^{\rm T}$
 has the distribution $\mathcal{CN}\big(\bm{0}_{N_r},\sigma_w^2\bm{I}_{N_r}\big)$.
 By using $[\bm{A}]_{[n:~]}$ and $[\bm{A}]_{[~:m]}$ to denote the $n$-th row and $m$-th column
 of $\bm{A}$, respectively, the signal received by the $n_r^{*}$-th antenna of aircraft $b^*$
 can be expressed as
\begin{align}\label{eq14}
 Y^{b^*}_{n_r^{*}} =& \sqrt{P_{r,b^*}^{a^*}} \left[\bm{H}_{b^*}^{a^*}\right]_{[n_r^{*}:~]}
  \left[\bm{V}_{b^*}^{a^*}\right]_{[~:n_r^{*}]} X_{n_r^{*}}^{a^*} 
  + \sum\limits_{n_r \neq n_r^{*}}\sqrt{P_{r,b^*}^{a^*}} \left[\bm{H}_{b^*}^{a^*}\right]_{[n_r^{*}:~]}
  \left[\bm{V}_{b^*}^{a^*}\right]_{[~:n_r]} X_{n_r}^{a^*} \nonumber \\ &
  + \sum\limits_{a=1}^A\sum\limits_{n_r = 1}^{N_{r}}\sqrt{P_{r,b^*}^a}\left[\bm{H}_{b^*}^a\right]_{[n_r^{*}:~]}
  \left[\bm{V}_{b^a}^a\right]_{[~:n_r]} X_{n_r}^{a} + W^{b^*}_{n_r^{*}}.  
\end{align}
 where the first term in the right-hand side of (\ref{eq14}) is the desired signal, the second
 term represents the IAI imposed by the $n_r$ antennas of aircraft $b^*$ for $n_r\neq n_r^*$
 on the desired signal, and the third term is the interference imposed by aircraft $a$ for
 $1\le a\le A$ on the desired signal.

\section{Analysis of Achievable Throughput of RZF-TPC}\label{S3}

 Since $b^*$ does not know the estimated CSI, the achievable ergodic rate is
 adopted. We will also take into account the channel estimation error. From the signal
 $Y^{b^*}_{n_r^{*}}$ (\ref{eq14}) received at the DRA $n_r^*$ of $b^*$, the power of the desired 
 signal $P_{{\rm S}_{b^*,n_r^*}^{a^*}}$ and the power of the interference pulse noise
 $P_{{\rm I\&N}_{b^*,n_r^*}^{a^*}}$ can be obtained respectively as
\begin{align}
 P_{{\rm S}_{b^*,n_r^*}^{a^*}} =&  P_{r,b^*}^{a^*} \left|\mathcal{E}\left\{
  \left[\bm{H}_{b^*}^{a^*}\right]_{[n_r^{*}:~]}
  \left[\bm{V}_{b^*}^{a^*}\right]_{[~:n_r^{*}]}\right\}\right|^2 , \label{eq15} \\
 P_{{\rm I\&N}_{b^*,n_r^*}^{a^*}} =& P_{r,b^*}^{a^*} \text{Var}\left\{
  \left[\bm{H}_{b^*}^{a^*}\right]_{[n_r^{*}:~]}
  \left[\bm{V}_{b^*}^{a^*}\right]_{[~:n_r^{*}]}\right\} + P_{r,b^*}^{a^*}
  \sum\limits_{n_r\neq n_r^*}\mathcal{E}\left\{\left| 
  \left[\bm{H}_{b^*}^{a^*}\right]_{[n_r^{*}:~]} 
  \left[\bm{V}_{b^*}^{a^*}\right]_{[~:n_r]}\right|^{2}\right\} \nonumber \\
 & + \sum\limits_{a=1}^A P_{r,b^*}^a \sum\limits_{n_r=1}^{N_r}\mathcal{E}\left\{\left|
 \left[\bm{H}_{b^*}^a\right]_{[n_r^{*}:~]}\left[\bm{V}_{b^a}^a\right]_{[~:n_r]}\right|^{2}\right\}
  + \sigma_w^2  , \label{eq16}
\end{align}
 where $\text{Var}\left\{~\right\}$ is the variance operator. Thus, the SINR
 at $n_r^*$-th DRA of $b^*$ is given by 
\begin{align}\label{eq17}
 \gamma_{b^*,n_r^*}^{a^*} =& \frac{ P_{{\rm S}_{b^*,n_r^*}^{a^*}} } 
  { P_{{\rm I\&N}_{b^*,n_r^*}^{a^*}} } ,
\end{align}
 and the achievable transmission rate per antenna between the transmitting aircraft $a^*$ and
 the destination aircraft $b^*$ can be readily expressed as
\begin{align}\label{eq18}
 C_{b^*}^{a^*} =& \frac{1}{N_r} \sum\limits_{n_r^*=1}^{N_r} \log_2 \left(1 + \gamma_{b^*,n_r^*}^{a^*}\right) .
\end{align}

\subsection{Statistics of Channel Estimate}\label{S3.1}

 The MMSE channel estimate $\left[\widehat{\bm{H}}_{b^*}^{a^*}\right]_{[n_r:~]}$ is
 related to the true channel $\left[\bm{H}_{b^*}^{a^*}\right]_{[n_r:~]}$ by
\begin{align}\label{eq19}
 \left[\bm{H}_{b^*}^{a^*}\right]_{[n_r:~]} =& \left[\widehat{\bm{H}}_{b^*}^{a^*}\right]_{[n_r:~]}
  + \left[\widetilde{\bm{H}}_{b^*}^{a^*}\right]_{[n_r:~]} ,
\end{align}
 where the estimation error $\left[\widetilde{\bm{H}}_{b^*}^{a^*}\right]_{[n_r:~]}$ is
 statistically independent of both $\left[\widehat{\bm{H}}_{b^*}^{a^*}\right]_{[n_r:~]}$
 and $\left[\bm{H}_{b^*}^{a^*}\right]_{[n_r:~]}$ \cite{kay2003fundamentals}. Recalling
 the distribution (\ref{eq8}), we have
\begin{align}\label{eq20}
  & \bm{vec}\left(\widehat{\bm{H}}_{b^{*}}^{a^{*}}\right)\sim
 \mathcal{CN}\left(\bm{vec}\left(\nu\bm{H}_{\text{d},b^{*}}^{a^*}\right),
 \bm{\Phi}_{b^{*}}^{a^{*}}\right) ,
\end{align}
 where $\bm{\Phi}_{b^{*}}^{a^{*}}$ is the covariance matrix  of the MMSE estimate
 $\bm{vec}\left(\widehat{\bm{H}}_{b^{*}}^{a^{*}}\right)$ given by
\begin{align}\label{eq21}
  \bm{\Phi}_{b^*}^{a^*} =& \varsigma^2 \bar{\bm{R}}_{\text{r},r}^t \left(
  \frac{\sigma_w^2}{P_{r,b^{*}}^{a^{*}}} \bm{I}_{N_r N_t} \! +\! \varsigma^2 \bar{\bm{R}}_{\text{r},r}^t
  \! + \! \sum\limits_{a=1}^A \frac{P_{r,a^{*}}^a}{P_{r,b^{*}}^{a^{*}}} \varsigma^2
  \bar{\bm{R}}_{\text{r},r}^t\right)^{-1}  \varsigma^2 \bar{\bm{R}}_{\text{r},r}^t .
\end{align}
 The spatial correlation matrix $\bar{\bm{R}}_{{\rm r},r}^t$ in (\ref{eq21}) is given by
 $\bar{\bm{R}}_{{\rm r},r}^t=\bar{\bm{R}}_t \otimes \bar{\bm{R}}^r$, and we have
 $\bar{\bm{R}}^r=\bm{I}_{N_r}$.

 The distribution of $\bm{vec}\left(\widetilde{\bm{H}}_{b^*}^{a^*}\right)$ is given by
\begin{align}\label{eq22}
 & \bm{vec}\left(\widetilde{\bm{H}}_{b^*}^{a^*}\right) \sim \mathcal{CN}\left(\bm{0}_{N_tN_r},
  \bm{\Xi}_{b^*}^{a^*}\right) ,
\end{align}
 whose covariance matrix $\bm{\Xi}_{b^*}^{a^*}$ can be expressed as
\begin{align}\label{eq23}
 \bm{\Xi}_{b^*}^{a^*} =& \varsigma^2\bar{\bm{R}}_{{\rm r},r}^t - \bm{\Phi}_{b^*}^{a^*}
  = \left[\begin{array}{cccc} \left[\bm{\Xi}_{b^*}^{a^*}\right]_{(1,1)} &
  \left[\bm{\Xi}_{b^*}^{a^*}\right]_{(1,2)} & \cdots & \left[\bm{\Xi}_{b^*}^{a^*}\right]_{(1,N_r)} \\
  \vdots & \vdots & \cdots & \vdots \\
  \left[\bm{\Xi}_{b^*}^{a^*}\right]_{(N_r,1)} & \left[\bm{\Xi}_{b^*}^{a^*}\right]_{(N_r,2)} & \cdots
  & \left[\bm{\Xi}_{b^*}^{a^*}\right]_{(N_r,N_r)} \end{array}\right]
  \in \mathbb{C}^{N_tN_r\times N_tN_r} ,
\end{align}
 where $\left[\bm{\Xi}_{b^*}^{a^*}\right]_{(i,j)}=\mathcal{E}\left\{
 \left[\widetilde{\bm{H}}_{b^*}^{a^*}\right]_{[i:~]}^{\rm H}
 \left[\widetilde{\bm{H}}_{b^*}^{a^*}\right]_{[j:~]}\right\}\in
 \mathbb{C}^{N_t\times N_t}$, $\forall i,j\in \{1,2,\cdots, N_r\}$. This indicates
 that the distribution of $\left[\widetilde{\bm{H}}_{b^*}^{a^*}\right]_{[n_r:~]}$
 is given by
\begin{align}\label{eq24}
 \left[\widetilde{\bm{H}}_{b^*}^{a^*}\right]_{[n_r:~]}^{\rm T} \sim &
 \mathcal{CN}\left(\bm{0}_{N_t},\left[\bm{\Xi}_{b^*}^{a^*}\right]_{(n_r,n_r)}\right) .
\end{align}

 Furthermore, the correlation matrix $\mathcal{E}\left\{\bm{vec}\left(
 \widehat{\bm{H}}_{a^*}^{b^*}\right) \bm{vec}\left(\widehat{\bm{H}}_{a^*}^{b^*}\right)^{\rm H}\right\}
 =\nu^2\bm{M}_{b^*}^{a^*}+\bm{\Phi}_{b^*}^{a^*}$, where
\begin{align}\label{eq25}
 \bm{M}_{b^*}^{a^*} =& \bm{vec}\left(\bm{H}_{\text{d},a^*}^{b^*}\right)
  \bm{vec}\left(\bm{H}_{{\rm d},a^*}^{b^*}\right)^{\rm H} \in \mathbb{C}^{N_tN_r\times N_tN_r} .
\end{align}
 $\bm{M}_{b^*}^{a^*}$ can be expressed in a form similar to (\ref{eq23}) having the
 $(i,j)$-th sub-matrix of $\left[\bm{M}_{b^*}^{a^*}\right]_{(i,j)}=\left[
 \bm{H}_{{\rm d},a^*}^{b^*}\right]_{[i:~]}^{\rm H}\left[\bm{H}_{{\rm d},a^*}^{b^*}\right]_{[j:~]}
 \in\mathbb{C}^{N_t\times N_t}$, $\forall i,j\in \{1,2,\cdots, N_r\}$. Likewise,
 $\bm{\Phi}_{b^*}^{a^*}$ has a form similar to that of (\ref{eq23}) having the $(i,j)$th
 sub-matrix of  $\left[\bm{\Phi}_{b^*}^{a^*}\right]_{(i,j)}\in\mathbb{C}^{N_t\times N_t}$,
 $\forall i,j\in \{1,2,\cdots, N_r\}$.

\subsection{Desired Signal Power}\label{S3.2}

 Four useful lemmas are collected in Appendix~\ref{Apa}. In order to exploit Lemma~\ref{L1}
 for calculating the desired signal power, we define
\begin{align}\label{eq26}
 \bm{\Upsilon}_{b^*,\emptyset n_r^*}^{a^*} =& \left(\frac{1}{N_t}\big(\widehat{\bm{H}}_{b^*}^{a^*}\big)^{\rm H}
  \widehat{\bm{H}}_{b^*}^{a^*} - \frac{1}{N_t}\left[\widehat{\bm{H}}_{b^*}^{a^*}\right]_{[n_{r}^*:~]}^{\rm H}
  \left[\widehat{\bm{H}}_{b^*}^{a^*}\right]_{[n_{r}^*:~]} + \xi_{b^*}^{a^*}\bm{I}_{N_{t}}\right)^{-1} .
\end{align}
 Clearly, $\bm{\Upsilon}_{b^*,\emptyset n_r^*}^{a^*}$ is independent of
 $\left[\widehat{\bm{H}}_{b^*}^{a^*}\right]_{[n_{r}^*:~]}$. Recalling $\bm{\Upsilon}_{b^*}^{a^*}$
 of (\ref{eq12}), we can express
 $\bm{\Upsilon}_{b^*}^{a^*}\left[\widehat{\bm{H}}_{b^*}^{a^*}\right]_{[n_{r}^{*}:~]}^{\rm H}$ as 
\begin{align}\label{eq27}
 \bm{\Upsilon}_{b^*}^{a^*}\left[\widehat{\bm{H}}_{b^*}^{a^*}\right]_{[n_{r}^{*}:~]}^{\rm H} 
  =& \frac{\bm{\Upsilon}_{b^*,\emptyset n_r^{*}}^{a^*}
  \left[\widehat{\bm{H}}_{b^*}^{a^*}\right]_{[n_{r}^{*}:~]}^{\rm H}}{1 +
  \frac{1}{N_{t}}\left[\widehat{\bm{H}}_{b^*}^{a^*}\right]_{[n_{r}^{*}:~]}
  \bm{\Upsilon}_{b^*,\emptyset n_r^{*}}^{a^*}\left[\widehat{\bm{H}}_{b^*}^{a^*}\right]_{[n_{r}^{*}:~]}^{\rm H}} ,
\end{align}
 according to Lemma~\ref{L1}. Furthermore, $\left[\bm{H}_{b^*}^{a^*}\right]_{[n_r^*:~]}
 \left[\bm{V}_{b^*}^{a^*}\right]_{[~:n_r^*]}$ can be formulated as
\begin{align}\label{eq28}
 \left[\bm{H}_{b^*}^{a^*}\right]_{[n_r^{*}:~]}  \left[\bm{V}_{b^*}^{a^*}\right]_{[~:n_r^{*}]} =&
  \frac{\left[\bm{H}_{b^*}^{a^*}\right]_{[n_r^{*}:~]}\bm{\Upsilon}_{b^*,\emptyset n_r^{*}}^{a^*}
  \left[\widehat{\bm{H}}_{b^*}^{a^*}\right]_{[n_{r}^{*}:~]}^{\rm H}}{1 + \frac{1}{N_t}
  \left[\widehat{\bm{H}}_{b^*}^{a^*}\right]_{[n_{r}^{*}:~]}\bm{\Upsilon}_{b^*,\emptyset n_r^{*}}^{a^*}
  \left[\widehat{\bm{H}}_{b^*}^{a^*}\right]_{[n_{r}^{*}:~]}^{\rm H}} .
\end{align}
 Recalling Lemmas~\ref{L2} to \ref{L4} and (\ref{eq19}) as well as the fact that
 $\left[\widetilde{\bm{H}}_{b^*}^{a^*}\right]_{[n_r^{*}:~]}$ is independent of
 $\left[\widehat{\bm{H}}_{b^*}^{a^*}\right]_{[n_r^{*}:~]}$, the expectation of
 $\left[\bm{H}_{b^*}^{a^*}\right]_{[n_r^{*}:~]}\bm{\Upsilon}_{b^*,\emptyset n_r^{*}}^{a^*}
 \left[\widehat{\bm{H}}_{b^*}^{a^*}\right]_{[n_{r}^{*}:~]}^{\rm H}$ can be rewritten as
\begin{align}\label{eq29}
 \mathcal{E}\left\{\left[\bm{H}_{b^*}^{a^*}\right]_{[n_r^{*}:~]}\bm{\Upsilon}_{b^*,\emptyset n_r^{*}}^{a^*}
  \left[\widehat{\bm{H}}_{b^*}^{a^*}\right]_{[n_{r}^{*}:~]}^{\rm H} \right\} =&
  \mathcal{E}\left\{ \left(\left[\widehat{\bm{H}}_{b^*}^{a^*}\right]_{[n_r^{*}:~]} +
  \left[\widetilde{\bm{H}}_{b^*}^{a^*}\right]_{[n_r^{*}:~]}\right)\bm{\Upsilon}_{b^*,\emptyset n_r^{*}}^{a^*}
  \left[\widehat{\bm{H}}_{b^*}^{a^*}\right]_{[n_{r}^{*}:~]}^{\rm H}\right\} \nonumber \\
 =& \vartheta_{b^*,n_{r}^{*}}^{a^*} =  \text{Tr}\left\{\left[\bm{\Theta}_{b^{*}}^{a^{*}}\right]_{(n_r^{*},n_r^{*})}\bm{\Upsilon}_{b^*, \emptyset n_r^{*}}^{a^*}\right\} ,
\end{align}
 in which $\text{tr}\{\cdot \}$ denotes the matrix-trace operation, and
\begin{align}
 \vartheta_{b^*,n_{r}^{*}}^{a^*} =&
  \left[\widehat{\bm{H}}_{b^*}^{a^*}\right]_{[n_{r}^{*}:~]}\bm{\Upsilon}_{b^*,\emptyset n_r^{*}}^{a^*}
  \left[\widehat{\bm{H}}_{b^*}^{a^*}\right]_{[n_{r}^{*}:~]}^{\rm H} , \label{eq30} \\
 \left[\bm{\Theta}_{b^{*}}^{a^{*}}\right]_{(n_r^{*},n_r^{*})} =& \nu^2\left[\bm{M}_{b^{*}}^{a^{*}}\right]_{(n_r^{*},n_r^{*})}
  + \left[\bm{\Phi}_{b^{*}}^{a^{*}}\right]_{(n_r^{*},n_r^{*})} \label{eq31} .
\end{align}

 The following theorem is required for the asymptotic analysis of the achievable data rate.

\begin{theorem}[Deterministic equivalents \cite{hachem2007deterministic,hachem2012clt}]\label{T1}
 Let $\bm{H}=\nu\bar{\bm{H}}+\varsigma (\bm{R})^{\frac{1}{2}} \frac{\bm{G}}{\sqrt{N}}
 (\widetilde{\bm{R}})^{\frac{1}{2}}\in \mathbb{C}^{N\times K}$, where $\bar{\bm{H}}\in
 \mathbb{C}^{N\times K}$ is a deterministic matrix, $\bm{R}\in \mathbb{C}^{N\times N}$ and
 $\widetilde{\bm{R}}\in \mathbb{C}^{K\times K}$ are deterministic diagonal matrices with
 non-negative diagonal elements, and $\bm{G}\in \mathbb{C}^{N\times K}$ is a random matrix
 with each element obeying the distribution $\mathcal{CN}(0,1)$, while $\nu\in [0, ~1]$ and
 $\varsigma\in [0,~1]$ with $\nu^{2}+\varsigma^{2}=1$ are the weighting factors of
 $\bar{\bm{H}}$ and $(\bm{R})^{\frac{1}{2}} \frac{\bm{G}}{\sqrt{N}}
 (\widetilde{\bm{R}})^{\frac{1}{2}}$, respectively. Furthermore, $\bm{R}$ and $\widetilde{\bm{R}}$
 have uniformly bounded spectral norms with respect to $K$ and $N$. The matrices
\begin{align}
 \bm{T}(\xi ) =& \left(\xi\left(\bm{I}_N + \tilde{\delta}\varsigma^2 \bm{R}\right) + \nu^2
  \bar{\bm{H}}\left(\bm{I}_K + \delta\widetilde{\bm{R}}\right)^{-1} \bar{\bm{H}}^{\rm H}\right)^{-1} ,
  \label{eq32} \\
 \widetilde{\bm{T}}(\xi ) =& \left(\xi\left(\bm{I}_K + \delta\varsigma^2 \widetilde{\bm{R}}\right) + 
  \nu^2 \bar{\bm{H}}^{\rm H}\left(\bm{I}_N + \tilde{\delta}\bm{R}\right)^{-1}\bar{\bm{H}}\right)^{-1} ,
  \label{eq33}
\end{align}
 are the respective approximations of the resolvent $\bm{Q}(\xi)$ and the co-resolvent
 $\widetilde{\bm{Q}}(\xi)$
\begin{align}
 \bm{Q}(\xi ) =& \left(\bm{H} \bm{H}^{\rm H} + \xi\bm{I}_N\right)^{-1} , \label{eq34} \\
 \widetilde{\bm{Q}}(\xi ) =& \left(\bm{H}^{\rm H}\bm{H} + \xi\bm{I}_K\right)^{-1} . \label{eq35}
\end{align}
 In (\ref{eq32}) and  (\ref{eq33}), $(\delta, \tilde{\delta})$ admits a unique solution in
 the class of Stieltjes transforms \cite{ismail1979special} of non-negative measures with
 the support in $\mathbb{R}^{+}$, which are given by
\begin{align}
 \delta =& \frac{1}{N}\text{Tr}\left\{\varsigma^2 \bm{R}\left(\xi\left(\bm{I}_N +
  \tilde{\delta}\varsigma^{2} \bm{R}\right) + \nu^2 \bar{\bm{H}}\left(\bm{I}_K +
  \delta\varsigma^2 \widetilde{\bm{R}}\right)^{-1}\bar{\bm{H}}^{\rm H}\right)^{-1}\right\} , \label{eq36} \\
 \tilde{\delta} =& \frac{1}{N}\text{Tr}\left\{\varsigma^2 \widetilde{\bm{R}}\left(\xi\left(\bm{I}_K
  + \delta\varsigma^2 \widetilde{\bm{R}}\right) + \nu^2 \bar{\bm{H}}^{\rm H}\left(\bm{I}_N +
  \tilde{\delta}\varsigma^2 \bm{R}\right)^{-1}\bar{\bm{H}}\right)^{-1}\right\} . \label{eq37}
\end{align}
 Then $\delta$ and $\tilde{\delta}$ can be numerically solved as
\begin{align}\label{eq38}
 \delta =& \lim\limits_{t\to \infty} \delta^{(t)} \text{ and }
 \tilde{\delta} = \lim\limits_{t\to \infty} \tilde{\delta}^{(t)} ,
\end{align}
 by defining $\delta^{(t)}$ and $\tilde{\delta}^{(t)}$
\begin{align}
 \delta^{(t)} =& \frac{1}{N}\text{Tr}\left\{\varsigma^2 \bm{R}\left(\xi\left(\bm{I}_N
  + \tilde{\delta}^{(t-1)}\varsigma^2 \bm{R}\right) + \nu^2 \bar{\bm{H}}\left(\bm{I}_K
  + \delta^{(t-1)}\varsigma^2 \widetilde{\bm{R}}\right)^{-1} \bar{\bm{H}}^{\rm H}\right)^{-1}\right\} ,\label{eq39} \\
 \tilde{\delta}^{(t)} =& \frac{1}{N}\text{Tr}\left\{\varsigma^2 \widetilde{\bm{R}}\left(\xi\left(\bm{I}_K
  + \delta^{(t)}\varsigma^2 \widetilde{\bm{R}}\right) + \nu^2 \bar{\bm{H}}^{\rm H}\left(\bm{I}_N
  + \tilde{\delta}^{(t-1)}\varsigma^2 \bm{R}\right)^{-1}\bar{\bm{H}}\right)^{-1}\right\} , \label{eq40}
\end{align}
 with the initial values of $\delta^{(0)}=\tilde{\delta}^{(0)}=\frac{1}{\xi}$.
\end{theorem}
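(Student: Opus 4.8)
This statement is a specialization, to the Rician model with separable correlation, of the deterministic-equivalent theorems of Hachem \emph{et al.} \cite{hachem2007deterministic,hachem2012clt}, so the plan is to run the standard resolvent-plus-concentration argument of large random matrix theory while carefully tracking the contribution of the deterministic mean $\nu\bar{\bm{H}}$. First I would settle well-posedness of the coupled system (\ref{eq36})--(\ref{eq37}): for $\xi>0$ it fits the familiar framework of coupled fixed-point equations for Stieltjes transforms, for which monotonicity in the appropriate variables together with a scalability argument (in the spirit of Yates' standard interference functions) yields existence and uniqueness of $(\delta,\tilde{\delta})$, while for complex $\xi$ off $\mathbb{R}^+$ analyticity plus a contraction estimate in a metric adapted to Stieltjes transforms does the same; since the iteration started from $\delta^{(0)}=\tilde{\delta}^{(0)}=1/\xi$ is then bounded and convergent, this also gives (\ref{eq38}).

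The core step is the deterministic equivalent of the resolvent. Writing $\bm{H}=[\bm{h}_1,\dots,\bm{h}_K]$ with $\bm{h}_k=\nu\bar{\bm{h}}_k+\varsigma\sqrt{\tilde{r}_k}\,\bm{R}^{1/2}\bm{g}_k/\sqrt{N}$, so that $\bm{H}\bm{H}^{\rm H}=\sum_k\bm{h}_k\bm{h}_k^{\rm H}$, I would introduce the rank-one-deflated resolvents $\bm{Q}_{(k)}(\xi)=(\bm{H}\bm{H}^{\rm H}-\bm{h}_k\bm{h}_k^{\rm H}+\xi\bm{I}_N)^{-1}$, which are independent of $\bm{g}_k$. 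The Sherman--Morrison identity of Lemma~\ref{L1} gives $\bm{Q}\bm{h}_k=\bm{Q}_{(k)}\bm{h}_k/(1+\bm{h}_k^{\rm H}\bm{Q}_{(k)}\bm{h}_k)$, while the trace and quadratic-form concentration bounds of Lemmas~\ref{L2}--\ref{L4} show that $\bm{h}_k^{\rm H}\bm{Q}_{(k)}\bm{h}_k$ and $\frac{1}{N}\mathrm{Tr}\{\bm{A}\bm{Q}_{(k)}\}$ concentrate about their deterministic counterparts, the mean contributions $\nu^2\bar{\bm{h}}_k^{\rm H}\bm{Q}_{(k)}\bar{\bm{h}}_k$ being handled directly because $\bar{\bm{h}}_k$ is deterministic. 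I would then form a telescoping sum along an interpolating family of resolvents in which the random columns of $\bm{H}$ are replaced one at a time by their deterministic-equivalent contributions, bound each increment with these concentration estimates while keeping the denominators $1+\bm{h}_k^{\rm H}\bm{Q}_{(k)}\bm{h}_k$ uniformly away from zero (using $\mathrm{Im}\,\xi\neq0$ or $\xi>0$ and the uniform spectral-norm bounds on $\bm{R},\widetilde{\bm{R}}$), and conclude $\frac{1}{N}\mathrm{Tr}\{\bm{D}(\bm{Q}(\xi)-\bm{T}(\xi))\}\to0$ almost surely for every deterministic $\bm{D}$ of bounded norm, and likewise $\bm{a}^{\rm H}(\bm{Q}(\xi)-\bm{T}(\xi))\bm{b}\to0$; matching the self-consistency relations produced by this computation against (\ref{eq32}) and (\ref{eq36}) identifies $\bm{T}(\xi)$ as stated.

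The deterministic equivalent $\widetilde{\bm{T}}(\xi)$ for the co-resolvent $\widetilde{\bm{Q}}(\xi)=(\bm{H}^{\rm H}\bm{H}+\xi\bm{I}_K)^{-1}$ then follows either by the symmetric argument with the roles of rows and columns, of $\bm{R}$ and $\widetilde{\bm{R}}$, and of $N$ and $K$ interchanged, or more economically from the identity $\widetilde{\bm{Q}}(\xi)=\xi^{-1}(\bm{I}_K-\bm{H}^{\rm H}\bm{Q}(\xi)\bm{H})$ combined with the previous step; the cross relations linking $\delta$ and $\tilde{\delta}$ are precisely (\ref{eq36})--(\ref{eq37}). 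Collecting the three steps gives the theorem.

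The step I expect to be the main obstacle is the telescoping/interpolation argument. Making it rigorous requires a uniform lower bound on the random denominators $1+\bm{h}_k^{\rm H}\bm{Q}_{(k)}\bm{h}_k$, which is delicate precisely because the deterministic line-of-sight term $\nu\bar{\bm{h}}_k$ enters additively and may be large, and it requires propagating the per-column $O(N^{-1/2})$ fluctuations through the $K=O(N)$ increments without accumulation, i.e.\ establishing Lipschitz stability of the fixed-point system (\ref{eq36})--(\ref{eq37}) with constants independent of $N$ and $K$. These are exactly the technical ingredients supplied by \cite{hachem2007deterministic,hachem2012clt}; what remains to be verified for the present setting is only that the standing hypotheses --- diagonal $\bm{R}$ and $\widetilde{\bm{R}}$ with non-negative entries and uniformly bounded spectral norms, and $\nu^2+\varsigma^2=1$ --- hold for the aeronautical Rician channel model introduced in Section~\ref{S2}.
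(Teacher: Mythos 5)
The paper never proves Theorem~\ref{T1}: as the bracketed attribution indicates, it is imported as a known result from \cite{hachem2007deterministic,hachem2012clt}, and the only obligation the paper discharges is to invoke it with its hypotheses (diagonal $\bm{R},\widetilde{\bm{R}}$ with bounded spectral norms, $\nu^2+\varsigma^2=1$) satisfied by the Rician model of Section~\ref{S2}. So there is no in-paper argument to compare yours against; what your proposal does is reconstruct, in outline, the standard proof from the cited references — well-posedness and uniqueness of the coupled fixed point in the class of Stieltjes transforms, leave-one-column-out resolvents via the rank-one identity of Lemma~\ref{L1}, concentration of quadratic forms, a telescoping/interpolation step to show $\frac{1}{N}\mathrm{Tr}\{\bm{D}(\bm{Q}(\xi)-\bm{T}(\xi))\}\to 0$, and the co-resolvent obtained from $\widetilde{\bm{Q}}(\xi)=\xi^{-1}\big(\bm{I}_K-\bm{H}^{\rm H}\bm{Q}(\xi)\bm{H}\big)$ — and that outline is faithful to how the result is actually established there. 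Two caveats on your use of the paper's toolbox: Lemma~\ref{L2} is a rank-one matrix-inversion identity, not a concentration estimate, and Lemmas~\ref{L3}--\ref{L4} as stated give only asymptotic (almost-sure limit) trace approximations with no rate, so the uniform control of the $K=O(N)$ increments and of the denominators $1+\bm{h}_k^{\rm H}\bm{Q}_{(k)}\bm{h}_k$ — which you yourself flag as the main obstacle — cannot be closed with the paper's lemmas alone and genuinely requires the variance/moment bounds and the stability analysis of the fixed-point system supplied in \cite{hachem2007deterministic,hachem2012clt}, to which you correctly defer. Given that the theorem is a quoted external result, your proposal is sound at the appropriate level of detail, and it is in effect the proof the paper implicitly relies on by citation.
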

 
 According to Theorem~\ref{T1}, $\bm{\Upsilon}_{b^*}^{a^*}$ can be approximated as
\begin{align}\label{eq41}
 \bm{\Upsilon}_{b^*}^{a^*} \approx & \bigg(\xi_{b^*}^{a^*}\left(\bm{I}_{N_t} + \tilde{\delta}_{b^*}^{a^*}
  \varsigma^2 \bm{\Phi}_{a^*}\right) + \frac{1}{N_t}\nu^2\big(\bm{H}_{{\rm d},b^*}^{a^*}\big)^{\rm H}
  \left(\bm{I}_{N_r} + \delta_{b^*}^{a^*} \varsigma^2 \bm{\Phi}^{b^*}\right)^{-1}\bm{H}_{{\rm d},b^*}^{a^*}
  \bigg)^{-1} .
\end{align}
 By recalling (\ref{eq36}) and (\ref{eq37}), $\delta_{b^*}^{a^*}$ and $\tilde{\delta}_{b^*}^{a^*}$
 in (\ref{eq41}) are given by
\begin{align}
 \delta_{b^*}^{a^*}\! =& \frac{1}{N_t}\text{Tr}\Bigg\{\! \varsigma^2 \bm{\Phi}_{a^*}
  \bigg(\xi_{b^*}^{a^*}\left(\bm{I}_{N_t}\! +\! \tilde{\delta}_{b^*}^{a^*}\varsigma^2 \bm{\Phi}_{a^*}\right)\! +\! 
  \frac{1}{N_t}\nu^2 \big(\bm{H}_{{\rm d},b^*}^{a^*}\big)^{\rm H}\left(\bm{I}_{N_r}\! +\!
  \delta_{b^*}^{a^*}\varsigma^2 \bm{\Phi}^{b^*}\right)^{-1}\bm{H}_{{\rm d},b^*}^{a^*} \bigg)^{-1}\! \Bigg\}\! ,\! 
  \label{eq42} \\
 \tilde{\delta}_{b^*}^{a^*}\! =& \frac{1}{N_t}\text{Tr}\Bigg\{\! \varsigma^2 \bm{\Phi}^{b^*}
  \bigg(\xi_{b^*}^{a^*}\left(\bm{I}_{N_r}\! +\! \delta_{b^*}^{a^*}\varsigma^2 \bm{\Phi}^{b^*}\right)\! + \!
  \frac{1}{N_t}\nu^2 \bm{H}_{{\rm d},b^*}^{a^*}\left(\bm{I}_{N_t}\! +\! \tilde{\delta}_{b^*}^{a^*}\varsigma^2
  \bm{\Phi}_{a^*}\right)^{-1}\big(\bm{H}_{{\rm d},b^*}^{a^*}\big)^{\rm H} \bigg)^{-1} \! \Bigg\}\! .\! \label{eq43}
\end{align}
 Furthermore, given (\ref{eq41}) and recalling (\ref{eq31}), $\bm{\Upsilon}_{b^*,\emptyset n_r^{*}}^{a^*}$
 can be approximated as
\begin{align}\label{eq44}
 \bm{\Upsilon}_{b^*,\emptyset n_r^*}^{a^*}\! \approx & \bigg(\! \xi_{b^*}^{a^*}\! \left(\! \bm{I}_{N_t}\! +\!
  \tilde{\delta}_{b^*}^{a^*}\varsigma^2 \bm{\Phi}_{a^*}\! \right)\!\! +\! \frac{1}{N_t}\nu^2
  \big(\bm{H}_{{\rm d},b^*}^{a^*}\big)^{\rm H}\! \left(\! \bm{I}_{N_r}\! +\! \delta_{b^*}^{a^*} \varsigma^2
  \bm{\Phi}^{b^*}\right)^{-1}\bm{H}_{{\rm d},b^*}^{a^*} 
 \! -\!\frac{1}{N_t}\left[\bm{\Theta}_{b^*}^{a^*}\right]_{(n_r^*,n_r^*)} \! \bigg)^{-1} \!\! .
\end{align}
 Then $\vartheta_{b^*,n_r^*}^{a^*}$ of (\ref{eq30}) can be further expressed as
\begin{align}\label{eq45}
 \vartheta_{b^*,n_r^*}^{a^*} =& \text{Tr}\bigg\{\left[\bm{\Theta}_{b^*}^{a^*}\right]_{(n_r^*,n_r^*)}
  \bigg(\xi_{b^*}^{a^*}\left(\bm{I}_{N_t} + \tilde{\delta}_{b^*}^{a^*}\varsigma^2 \bm{\Phi}_{a^*}\right)  \nonumber \\
 & +\frac{1}{N_t}\nu^2\big(\bm{H}_{{\rm d},b^*}^{a^*}\big)^{\rm H}\left(\bm{I}_{N_r} +
  \delta_{b^*}^{a^*} \varsigma^2 \bm{\Phi}^{b^*}\right)^{-1}\bm{H}_{{\rm d},b^*}^{a^*} -
  \frac{1}{N_t}\left[\bm{\Theta}_{b^*}^{a^*}\right]_{(n_r^*,n_r^*)} \bigg)^{-1}	\bigg\} .
\end{align}
	
 Hence, noting (\ref{eq28}), (\ref{eq29}) and (\ref{eq45}), the desired signal power of
 (\ref{eq15}) can be expressed as
\begin{align}\label{eq46}
 P_{{\rm S}_{b^*,n_r^*}^{a^*}} =& P_{r,b^*}^{a^*}\bigg(\frac{\vartheta_{b^*,n_r^*}^{a^*}}{1 + \frac{1}{N_t}
  \vartheta_{b^*,n_r^*}^{a^*}}\bigg)^2 .
\end{align}

\subsection{Interference Plus Noise Power}\label{S3.3}

 Recalling (\ref{eq30}) and Lemma~\ref{L1}, we can express $\big[\widehat{\bm{H}}_{b^*}^{a^*}\big]_{[n_r^*:~]}
 \big[\bm{V}_{b^*}^{a^*}\big]_{[~:n_r^*]}$ as
\begin{align}\label{eq47}
 \left[\widehat{\bm{H}}_{b^*}^{a^*}\right]_{[n_r^*:~]}\left[\bm{V}_{b^*}^{a^*}\right]_{[~:n_r^*]} =&
  \frac{\vartheta_{b^*,n_{r}^{*}}^{a^*}}{1 + \frac{1}{N_t}\vartheta_{b^*,n_r^*}^{a^*}} .
\end{align}
 Thus, we have
\begin{align}\label{eq48}
 \text{Var}\left\{\big[\bm{H}_{b^*}^{a^*}\big]_{[n_r^*:~]}\big[\bm{V}_{b^*}^{a^*}\big]_{[~:n_r^*]}\right\}
  =& \mathcal{E}\Bigg\{\Bigg| \frac{\big[\widetilde{\bm{H}}_{b^*}^{a^*}\big]_{[n_r^*:~]}
  \bm{\Upsilon}_{b^*,\emptyset n_r^{*}}^{a^*}\big[\widehat{\bm{H}}_{b^*}^{a^*}\big]_{[n_r^{*}:~]}^{\rm H}}
  {1 + \frac{1}{N_t}\big[\widehat{\bm{H}}_{b^*}^{a^*}\big]_{[n_r^*:~]}
  \bm{\Upsilon}_{b^*,\emptyset n_r^{*}}^{a^*}\big[\widehat{\bm{H}}_{b^*}^{a^*}\big]_{[n_r^*:~]}^{\rm H}}
  \Bigg|^2\Bigg\} \nonumber \\
 =& \text{Tr}\Bigg\{ \frac{\bm{\Upsilon}_{b^*,\emptyset n_r^*}^{a^*}
  \big[\bm{\Theta}_{b^*}^{a^*}\big]_{(n_r^*,n_r^*)}\bm{\Upsilon}_{b^*,\emptyset n_r^*}^{a^*}
  \big[\bm{\Xi}_{b^*}^{a^*}\big]_{(n_r^*,n_r^*)}}{\left(1 + \frac{1}{N_t}
  \vartheta_{b^*,n_r^*}^{a^*}\right)^2} \Bigg\} .
\end{align}

 From Lemmas~\ref{L1} and \ref{L3}, $\mathcal{E}\left\{\big| \big[\bm{H}_{b^*}^{a^*}
 \big]_{[n_r^*:~]} \big[\bm{V}_{b^*}^{a^*}\big]_{[~:n_r]}\big|^2\right\}$, $n_r\neq n_r^*$,
 can be expressed as
\begin{align}\label{eq49}
 \mathcal{E}\left\{\left| \left[\bm{H}_{b^*}^{a^*}\right]_{[n_r^*:~]}
  \left[\bm{V}_{b^*}^{a^*}\right]_{[~:n_r]}\right|^2\right\} &= \mathcal{E}\Bigg\{
  \Bigg|\frac{\big[\bm{H}_{b^*}^{a^*}\big]_{[n_r^*:~]}\bm{\Upsilon}_{b^*,\emptyset n_r}^{a^*}
  \big[\widehat{\bm{H}}_{b^*}^{a^*}\big]_{[n_r:~]}^{\rm H}}{1 + \frac{1}{N_t}
  \big[\widehat{\bm{H}}_{b^*}^{a^*}\big]_{[n_r:~]}\bm{\Upsilon}_{a^*,\emptyset n_r}^{b^*}
  \big[\widehat{\bm{H}}_{b^*}^{a^*}\big]_{[n_r:~]}^{\rm H}}\Bigg|^2\Bigg\} , \nonumber \\
 & \hspace*{-10mm}= \mathcal{E}\Bigg\{ \frac{\left[\bm{H}_{b^*}^{a^*}\right]_{[n_r^*:~]}
  \bm{\Upsilon}_{b^*,\emptyset n_r}^{a^*}\left[\bm{\Theta}_{b^*}^{a^*}\right]_{(n_r,n_r)}
  \bm{\Upsilon}_{b^*,\emptyset n_r}^{a^*}\left[\bm{H}_{b^*}^{a^*}\right]_{[n_r^*:~]}^{\rm H}}
  {(1 + \frac{1}{N_t}\vartheta_{b^*,n_r}^{a^*})^2}\Bigg\} ,
\end{align}
 in which
\begin{align}
 \vartheta_{b^*,n_r}^{a^*} =& \text{Tr}\left\{\left[\bm{\Theta}_{b^*}^{a^*}\right]_{(n_r,n_r)}
  \bm{\Upsilon}_{b^*,\emptyset n_r}^{a^*}\right\} , \label{eq50} \\
 \bm{\Upsilon}_{b^*,\emptyset n_r}^{a^*}\!\! =& \bigg(\! \xi_{b^*}^{a^*}\!\! \left(\!\! \bm{I}_{N_t}\!\! +\!
  \frac{1}{N_t}\tilde{\delta}_{b^*}^{a^*}\varsigma^2 \bm{\Phi}_{a^*}\!\! \right)\!\! +\! \frac{1}{N_t}\nu^2
  \big(\bm{H}_{{\rm d},b^*}^{a^*}\big)^{\rm H}\! \left(\bm{I}_{N_r}\!\! +\! \delta_{b^*}^{a^*} \varsigma^2
  \bm{\Phi}^{b^*}\right)^{-1}\! \bm{H}_{{\rm d},b^*}^{a^*}  
  \!\! -\! \frac{1}{N_t}\! \left[\bm{\Theta}_{b^*}^{a^*}\right]_{(n_r,n_r)} \!\! \bigg)^{-1} \!\!\! .\!
  \label{eq51}
\end{align}
 According to Lemma~\ref{L2}, we have
\begin{align}\label{eq52}
 \bm{\Upsilon}_{b^*,\emptyset n_r}^{a^*} =& \bm{\Upsilon}_{b^*,\emptyset n_r,n_r^*}^{a^*} -
  \frac{\frac{1}{N_t}\bm{\Upsilon}_{b^*,\emptyset n_r,n_r^*}^{a^*}
  \left[\widehat{\bm{H}}_{b^*}^{a^*}\right]_{[n_r^*:~]}^{\rm H} \left[\widehat{\bm{H}}_{b^*}^{a^*}\right]_{[n_r^*:~]}
  \bm{\Upsilon}_{b^*,\emptyset n_r,n_r^*}^{a^*}}{1 + \frac{1}{N_t}\left[\widehat{\bm{H}}_{b^*}^{a^*}\right]_{[n_r^*:~]}
  \bm{\Upsilon}_{b^*,\emptyset n_r,n_r^*}^{a^*}\left[\widehat{\bm{H}}_{b^*}^{a^*}\right]_{[n_r^*:~]}^{\rm H}} ,
\end{align}
 where $\bm{\Upsilon}_{b^*,\emptyset n_r,n_r^*}^{a^*}$ is independent of
 $\left[\widehat{\bm{H}}_{b^*}^{a^*}\right]_{[n_r^*:~]}$ and
 $\left[\widehat{\bm{H}}_{b^*}^{a^*}\right]_{[n_r:~]}$, and it can be approximated as
\begin{align}\label{eq53}
 \bm{\Upsilon}_{b^*,\emptyset n_r, n_r^*}^{a^*} =& \bigg(\xi_{b^*}^{a^*}\left(\bm{I}_{N_t} +
  \tilde{\delta}_{b^*}^{a^*}\varsigma^2 \bm{\Phi}_{a^*}\right) + \frac{1}{N_t}\nu^2
  \big(\bm{H}_{{\rm d},b^*}^{a^*}\big)^{\rm H}\left(\bm{I}_{N_r} + \delta_{b^*}^{a^*}
  \varsigma^2 \bm{\Phi}^{b^*}\right)^{-1}\bm{H}_{{\rm d},b^*}^{a^*} \nonumber \\
 & -\frac{1}{N_t}\left(\left[\bm{\Theta}_{b^*}^{a^*}\right]_{(n_r,n_r)} +
  \left[\bm{\Theta}_{b^*}^{a^*}\right]_{(n_r^*,n_r^*)}\right)\bigg)^{-1} .
\end{align}
 Then, we can rewrite $\left[\bm{H}_{b^*}^{a^*}\right]_{[n_r^*:~]}\bm{\Upsilon}_{b^*,\emptyset n_r}^{a^*}
 \left[\bm{\Theta}_{b^*}^{a^*}\right]_{(n_r,n_r)}\bm{\Upsilon}_{b^*,\emptyset n_r}^{a^*}
\left[\bm{H}_{b^*}^{a^*}\right]_{[n_r^*:~]}^{\rm H}$ as
\begin{align}\label{eq54}
 & \left[\bm{H}_{b^*}^{a^*}\right]_{[n_r^*:~]}\bm{\Upsilon}_{b^*,\emptyset n_r}^{a^*}
  \left[\bm{\Theta}_{b^*}^{a^*}\right]_{(n_r,n_r)}
  \bm{\Upsilon}_{b^*,\emptyset n_r}^{a^*}\left[\bm{H}_{b^*}^{a^*}\right]_{[n_r^*:~]}^{\rm H} = \nonumber \\
 & \hspace*{10mm}\left[\bm{H}_{b^*}^{a^*}\right]_{[n_r^*:~]}\bm{\Upsilon}_{b^*,\emptyset n_r,n_r^*}^{a^*}
  \left[\bm{\Theta}_{b^*}^{a^*}\right]_{(n_r,n_r)}\bm{\Upsilon}_{b^*,\emptyset n_r,n_r^*}^{a^*}
  \left[\bm{H}_{b^*}^{a^*}\right]_{[n_r^*:~]}^{\rm H} \nonumber \\
 & \hspace*{10mm}- 2\Re\left\{\frac{\frac{1}{N_t}\big[\bm{H}_{b^*}^{a^*}\big]_{[n_r^*:~]}
  \bm{\Upsilon}_{b^*,\emptyset n_r,n_r^*}^{a^*}\big[\widehat{\bm{H}}_{b^*}^{a^*}\big]_{[n_r^*:~]}^{\rm H}
  \big[\widehat{\bm{H}}_{b^*}^{a^*}\big]_{[n_r^*:~]} \widetilde{\bm{\Upsilon}}_{a^*,\emptyset n_r,n_r^*}^{b^*}
  \big[\bm{H}_{b^*}^{a^*}\big]_{[n_r^*:~]}^{\rm H}}{1 + \frac{1}{N_t}
  \big[\widehat{\bm{H}}_{b^*}^{a^*}\big]_{[n_r^*:~]}\bm{\Upsilon}_{b^*,\emptyset n_r,n_r^*}^{a^*}
  \big[\widehat{\bm{H}}_{b^*}^{a^*}\big]_{[n_r^*:~]}^{\rm H}} \right\} \nonumber \\
 & \hspace*{10mm}+ \frac{\frac{1}{N_t^2}\left|\big[\bm{H}_{b^*}^{a^*}\big]_{[n_r^*:~]}
  \bm{\Upsilon}_{b^*,\emptyset n_r,n_r^*}^{a^*}\big[\widehat{\bm{H}}_{b^*}^{a^*}\big]_{[n_r^*:~]}^{\rm H}\right|^2
  \big[\widehat{\bm{H}}_{b^*}^{a^*}\big]_{[n_r^*:~]} \widetilde{\bm{\Upsilon}}_{b^*,\emptyset n_r,n_r^*}^{a^*}
  \big[\widehat{\bm{H}}_{b^*}^{a^*}\big]_{[n_r^*:~]}^{\rm H} }{\left(1 + \frac{1}{N_t}
  \big[\widehat{\bm{H}}_{b^*}^{a^*}\big]_{[n_r^*:~]}\bm{\Upsilon}_{b^*,\emptyset n_r,n_r^*}^{a^*}
  \big[\widehat{\bm{H}}_{b^*}^{a^*}\big]_{[n_r^*:~]}^{\rm H}\right)^2} ,
\end{align}
 where $\Re\{\cdot \}$ denotes the real part of a complex number, and
 $\widetilde{\bm{\Upsilon}}_{b^*,\emptyset n_r,n_r^*}^{a^*}$ is given by
\begin{align}\label{eq55}
 \widetilde{\bm{\Upsilon}}_{b^*,\emptyset n_r,n_r^*}^{a^*} =& \bm{\Upsilon}_{b^*,\emptyset n_r,n_r^*}^{a^*} 
  \left[\bm{\Theta}_{b^*}^{a^*}\right]_{(n_r,n_r)}\bm{\Upsilon}_{b^*,\emptyset n_r,n_r^*}^{a^*} .
\end{align}			
 Thus by recalling Lemmas~\ref{L3} and \ref{L4}, we have the following approximations
\begin{align}
 & \big[\widehat{\bm{H}}_{b^*}^{a^*}\big]_{[n_r^*:~]}\bm{\Upsilon}_{b^*,\emptyset n_r,n_r^*}^{a^*}
  \big[\widehat{\bm{H}}_{b^*}^{a^*}\big]_{[n_r^*:~]}^{\rm H} = \vartheta_{b^*,n_r^*}^{a^*} , \label{eq56} \\
 & \big[\bm{H}_{b^*}^{a^*}\big]_{[n_r^*:~]} \bm{\Upsilon}_{b^*,\emptyset n_r,n_r^*}^{a^*}
  \big[\widehat{\bm{H}}_{b^*}^{a^*}\big]_{[n_r^*:~]}^{\rm H} = \vartheta_{b^*,n_r^*}^{a^*} , \label{eq57} \\
 & \big[\bm{H}_{b^*}^{a^*}\big]_{[n_r^*:~]}\widetilde{\bm{\Upsilon}}_{b^*,\emptyset n_r,n_r^*}^{a^*}
  \big[\bm{H}_{b^*}^{a^*}\big]_{[n_r^*:~]}^{\rm H} = \text{Tr}\left\{\big[\bm{\Omega}_{b^*}^{a^*}\big]_{(n_r^*,n_r^*)}
  \widetilde{\bm{\Upsilon}}_{b^*,\emptyset n_r,n_r^*}^{a^*}\right\} , \label{eq58} \\
 & \big[\widehat{\bm{H}}_{b^*}^{a^*}\big]_{[n_r^*:~]} \widetilde{\bm{\Upsilon}}_{b^*,\emptyset n_r,n_r^*}^{a^*}
  \big[\widehat{\bm{H}}_{b^*}^{a^*}\big]_{[n_r^{*}:~]}^{\rm H}
  = \text{Tr}\left\{\big[\bm{\Theta}_{b^*}^{a^*}\big]_{(n_r^*,n_r^*)}
  \widetilde{\bm{\Upsilon}}_{b^*,\emptyset n_r,n_r^*}^{a^*}\right\} , \label{eq59} \\
 & \big[\widehat{\bm{H}}_{b^*}^{a^*}\big]_{[n_r^*:~]} \widetilde{\bm{\Upsilon}}_{b^*,\emptyset n_r,n_r^*}^{a^*}
  \big[\bm{H}_{b^*}^{a^*}\big]_{[n_r^*:~]}^{\rm H} =
  \text{Tr}\left\{\big[\bm{\Theta}_{b^*}^{a^*}\big]_{(n_r^*,n_r^*)}
  \widetilde{\bm{\Upsilon}}_{b^*,\emptyset n_r,n_r^*}^{a^*}\right\} , \label{eq60}
\end{align}
 where $\big[\bm{\Omega}_{b^*}^{a^*}\big]_{(n_r^*,n_r^*)}\in\mathbb{C}^{N_t\times N_t}$ is given by
\begin{align}\label{eq61}
 \big[\bm{\Omega}_{b^*}^{a^*}\big]_{(n_r^*,n_r^*)} =& \nu^2\big[\bm{M}_{b^*}^{a^*}\big]_{(n_r^*,n_r^*)}
  + \big[\bm{R}_{b^*}^{a^*}\big]_{(n_r^*,n_r^*)} \,\,. 
\end{align}	
 First substituting (\ref{eq56}) to (\ref{eq60}) into (\ref{eq54}) and then substituting
 the result into (\ref{eq49}), we obtain
\begin{align}\label{eq62}
 & \mathcal{E}\left\{\left| \big[\bm{H}_{b^*}^{a^*}\big]_{[n_r^*:~]}
  \big[\bm{V}_{b^*}^{a^*}\big]_{[~:n_r]}\right|^2\right\} =
  \frac{\text{Tr}\left\{\big[\bm{\Omega}_{b^*}^{a^*}\big]_{(n_r^*,n_r^*)}
  \widetilde{\bm{\Upsilon}}_{b^*,\emptyset n_r,n_r^*}^{a^*}\right\}}{(1 + \frac{1}{N_t}
  \vartheta_{b^*,n_r}^{a^*})^2} \nonumber \\
 & \hspace*{5mm}-\frac{2\! \Re\! \left\{\! \frac{1}{N_t}\vartheta_{b^*,n_r^*}^{a^*}
  \text{Tr}\left\{\big[\bm{\Theta}_{b^*}^{a^*}\big]_{(n_r^*,n_r^*)}
  \widetilde{\bm{\Upsilon}}_{b^*,\emptyset n_r,n_r^*}^{a^*}\right\}\! \right\}}
  {(1 + \frac{1}{N_t}\vartheta_{b^*,n_r^*}^{a^*})(1 + \frac{1}{N_t}\vartheta_{b^*,n_r}^{a^*})^2} 
  \! + \! \frac{\left(\! \frac{1}{N_t}\vartheta_{b^*,n_r^*}^{a^*}\! \right)^2 \!
  \text{Tr}\left\{\! \big[\bm{\Theta}_{b^*}^{a^*}\big]_{(n_r^*,n_r^*)} \!
  \widetilde{\bm{\Upsilon}}_{b^*,\emptyset n_r,n_r^*}^{a^*}\! \right\}}{\left(1 + \frac{1}{N_t}
  \vartheta_{b^*,n_r^*}^{a^*}\right)^2\big(1 + \frac{1}{N_t}\vartheta_{b^*,n_r}^{a^*}\big)^2} .
\end{align}

 Thirdly, according to Lemmas~\ref{L1} and \ref{L3}, we have
\begin{align}\label{eq63}
 \mathcal{E}\left\{\left| \big[\bm{H}_{b^*}^a\big]_{[n_r^*:~]}
  \big[\bm{V}_{b^a}^a\right]_{[~:n_r]}\big|^2\right\} =&
  \mathcal{E}\Bigg\{\Bigg|\frac{\big[\bm{H}_{b^*}^a\big]_{[n_r^*:~]}\bm{\Upsilon}_{b^a,\emptyset n_r}^a
  \big[\widehat{\bm{H}}_{b^a}^a\big]_{[n_r:~]}^{\rm H}}{1 + \frac{1}{N_t}
  \big[\widehat{\bm{H}}_{b^a}^a\big]_{[n_r:~]}\bm{\Upsilon}_{b^a,\emptyset n_r}^a
  \big[\widehat{\bm{H}}_{b^a}^a\big]_{[n_r:~]}^{\rm H}}\Bigg|^2\Bigg\} \nonumber \\
 =& \frac{\text{Tr}\left\{\bm{\Upsilon}_{b^a,\emptyset n_r}^a\left[\bm{\Theta}_{b^a}^a\right]_{(n_r,n_r)}
  \bm{\Upsilon}_{b^a,\emptyset n_r}^a\left[\bm{\Omega}_{b^*}^a\right]_{(n_r^*,n_r^*)}\right\}}
  {(1 + \frac{1}{N_t}\vartheta_{b^a,n_r}^{a})^2} , 
\end{align}	
 where $\vartheta_{b^a,n_r}^a$, $\left[\bm{\Theta}_{b^a}^a\right]_{(n_r,n_r)}\in\mathbb{C}^{N_t\times N_t}$
 and $\left[\bm{\Omega}_{b^*}^a\right]_{(n_r^*,n_r^*)}\in\mathbb{C}^{N_t\times N_t}$ are given 
 respectively by 
\begin{align}
 \vartheta_{b^a,n_r}^a =& \text{Tr}\left\{\big[\bm{\Theta}_{b^a}^a\big]_{(n_r,n_r)}
  \bm{\Upsilon}_{b^a,\emptyset n_r}^a\right\} , \label{eq64} \\
 \big[\bm{\Theta}_{b^a}^a\big]_{(n_r,n_r)} =& \nu^2\left[\bm{M}_{b^a}^a\right]_{(n_r,n_r)} +
  \big[\bm{\Phi}_{b^a}^a\big]_{(n_r,n_r)} , \label{eq65} \\
 \big[\bm{\Omega}_{b^*}^a\big]_{(n_r^*,n_r^*)} =& \nu^2\left[\bm{M}_{b^*}^a\right]_{(n_r^*,n_r^*)}
  + \big[\bm{R}_{b^*}^a\big]_{(n_r^*,n_r^*)}, \label{eq66}
\end{align}
 while $\bm{\Upsilon}_{b^a,\emptyset n_r}^a$ has a similar form as $\bm{\Upsilon}_{b^*,\emptyset n_r}^{a^*}$,
 which is given by
\begin{align}\label{eq67}
 \bm{\Upsilon}_{b^a,\emptyset n_r}^a\! =& \bigg(\xi_{b^a}^a\! \left(\! \bm{I}_{N_t}\! +\! \tilde{\delta}_{b^a}^a
  \varsigma^2 \bm{\Phi}_{a}\right)\! +\! \frac{1}{N_t}\nu^2\big(\bm{H}_{{\rm d},b^a}^a\big)^{\rm H}\!
  \left(\bm{I}_{N_r}\! +\! \delta_{b^a}^a \varsigma^2 \bm{\Phi}^{b^a}\right)^{-1}\! \bm{H}_{{\rm d},b^a}^a
  \! -\! \frac{1}{N_t}\big[\bm{\Theta}_{b^a}^a\big]_{(n_r,n_r)}\! \bigg)^{-1} \!\! .\!
\end{align}

 By substituting (\ref{eq49}), (\ref{eq62}) and (\ref{eq63}) into (\ref{eq16}), we obtain					
\begin{align}\label{eq68}
 & P_{{\rm I\&N}_{b^*,n_r^*}^{a^*}} = \sigma_w^2 + P_{r,b^*}^{a^*}\text{Tr}\Bigg\{
  \frac{\bm{\Upsilon}_{b^*,\emptyset n_r^*}^{a^*} \big[\bm{\Theta}_{b^*}^{a^*}\big]_{(n_r^*,n_r^*)}
  \bm{\Upsilon}_{b^*,\emptyset n_r^*}^{a^*} \big[\bm{\Xi}_{b^*}^{a^*}\big]_{(n_r^*,n_r^*)}}
  {\big(1 + \vartheta_{b^*,n_r^*}^{a^*}\big)^2} \Bigg\} \nonumber \\
 & \hspace*{5mm}+ P_{r,b^*}^{a^*}\sum\limits_{n_r\neq n_r^*} \Bigg(\frac{
  \text{Tr}\Big\{\big[\bm{\Omega}_{b^*}^{a^*}\big]_{(n_r^*,n_r^*)}
  \widetilde{\bm{\Upsilon}}_{b^*,\emptyset n_r,n_r^*}^{a^*}\Big\} }{ \big(1 + \frac{1}{N_t}
  \vartheta_{b^*,n_r}^{a^*}\big)^2 } -
  \frac{ 2\Re\Big\{\frac{1}{N_t}\vartheta_{b^*,n_r^*}^{a^*}\text{Tr}\big\{
  \big[\bm{\Theta}_{b^*}^{a^*}\big]_{(n_r^*,n_r^*)}\widetilde{\bm{\Upsilon}}_{b^*,\emptyset n_r,n_r^*}^{a^*}\big\}
  \Big\}}{\big(1 + \frac{1}{N_t}\vartheta_{b^*,n_r^*}^{a^*}\big)\big(1 + \frac{1}{N_t}
  \vartheta_{b^*,n_r}^{a^*}\big)^2} \nonumber \\
 & \hspace*{5mm}+ \frac{\Big(\frac{1}{N_t}\vartheta_{b^*,n_r^*}^{a^*}\Big)^2
  \text{Tr}\Big\{\big[\bm{\Theta}_{b^*}^{a^*}\big]_{(n_r^*,n_r^*)}
  \widetilde{\bm{\Upsilon}}_{b^*,\emptyset n_r,n_r^*}^{a^*}\Big\} }{\Big(1 + \frac{1}{N_t}\vartheta_{b^*,n_r^*}^{a^*}\Big)^2
  \big(1 + \frac{1}{N_t}\vartheta_{b^*,n_r}^{a^*})^2}\Bigg) \nonumber \\
 & \hspace*{5mm}+ \sum\limits_{a=1}^A P_{r,b^*}^a \sum\limits_{n_r=1}^{N_r} \frac{\text{Tr}\Big\{
  \bm{\Upsilon}_{b^a,\emptyset n_r}^a \big[\bm{\Theta}_{b^a}^a\big]_{(n_r,n_r)}
  \bm{\Upsilon}_{b^a,\emptyset n_r}^a \big[\bm{\Omega}_{b^{*}}^{a}\big]_{(n_r^*,n_r^*)}
  \Big\}}{\big(1 + \frac{1}{N_t}\vartheta_{b^a,n_r}^a\big)^2} .
\end{align}

\subsection{Achievable Rate and Optimal Regularization Parameter}\label{S3.4}
		
 Finally, upon substituting (\ref{eq46}) as well as (\ref{eq68}) into (\ref{eq17}) and then
 using the result in (\ref{eq18}), we arrive at the closed-form achievable transmission rate
 per antenna $C_{b^*}^{a^*}$, which is our performance metric for designing the distance
 thresholds for the distance-based  ACM scheme.

 Since it is intractable to obtain an analytic optimal regularization parameter that maximizes 
 the achievable transmission rate per antenna, we consider the alternative mean-square-error
 for detecting the transmitted data vector $\bm{X}^{a^*}$ by aircraft $b^*$, which is
 given by
\begin{align}\label{eq69}
 & \mathcal{J}\big(\xi_{b^*}^{a^*}\big) = \mathcal{E}\Bigg\{\Bigg\|\frac{1}{N_t}\bm{H}_{b^*}^{a^*}
  \bm{V}_{b^*}^{a^*} \bm{X}^{a^*} + \frac{1}{N_t}\sum\limits_{a=1}^A \sqrt{\frac{P_{r,b^*}^a}{P_{r,b^*}^{a^*}}}
  \bm{H}_{b^*}^a\bm{V}_{b^a}^a \bm{X}^a + \frac{1}{N_t\sqrt{P_{r,b^*}^{a^*}}} \bm{W}_{b^*} -
  \bm{X}^{a^*}\Bigg\|^2\Bigg\} \nonumber \\
 & \hspace*{5mm}= \mathcal{E} \Bigg\{\! \bigg\|\frac{1}{N_t}\bm{H}_{b^*}^{a^*}\bm{V}_{b^*}^{a^*}
  \bm{X}^{a^*} \!\! -\! \bm{X}^{a^*}\bigg\|^2\!\! +\! \sum\limits_{a=1}^A \left\|\frac{1}{N_t}
  \sqrt{\frac{P_{r,b^*}^a}{P_{r,b^*}^{a^*}}}\bm{H}_{b^*}^a\bm{V}_{b^a}^a \bm{X}^a\right\|^2 \!\! + \!
  \Bigg\|\frac{1}{N_{t}\sqrt{P_{r,b^*}^{a^*}}}\bm{W}_{b^*}\Bigg\|^2 \Bigg\} \! .\!
\end{align}
 As detailed in Appendix~\ref{Apb}, the  closed-form optimal regularization parameter that
 minimizes the  mean-square data detection error (\ref{eq69}) is given by
\begin{align}\label{eq70}
 \big(\xi_{b^{*}}^{a^{*}}\big)^{\star} =& \frac{1}{N_t} \sum\limits_{n_t=1}^{N_t} \widetilde{\varphi}_{n_t},
\end{align}
 with
\begin{align}\label{eq71}
 \widetilde{\varphi}_{n_t} =& \sum\limits_{k=1}^{N_r} \bm{\Xi}_{b^*}^{a^*}\lvert_{[n_t+(k-1)N_t,n_t+(k-1)N_t]} ,
\end{align}
 in which $\bm{\Xi}_{b^*}^{a^*}\lvert_{[i,i]}$ denotes the $i$-th row and $i$-th column
 element of $\bm{\Xi}_{b^*}^{a^*}$.

\section{RZF-TPC Aided and Distance-Based ACM Scheme}\label{S4}

 The framework of the RZF-TPC aided and distance-based ACM scheme, which is depicted
 in the middle of Fig.~\ref{FIG1}, is similar to that of the EB-TPC aided and
 distance-based ACM scheme given in \cite{zhang2017adaptive}. The main difference is
 that here we adopt the much more powerful TPC solution at the transmitter. Given the
 system parameters, including the total system bandwidth $B_{\rm total}$, the number
 of subcarriers $N$, the number of CP samples $N_{\rm cp}$, the set of modulation
 constellations and the set of channel codes, the number of ACM modes $K$ together with
 the set of switching thresholds $\{d_k\}_{k=0}^K$ can now be designed, where
 $d_0=D_{\max}$ and $D_{\max}$ is the maximum communication distance, while $d_K=D_{\min}$
 and $D_{\min}$ is the minimum safe separation distance of aircraft. The online operations
 of the RZF-TPC aided and distance-based ACM transmission can then be summarized below.

\begin{itemize}
\item[1)] In the pilot training phase, aircraft $a^*$ estimates the channel matrix
 $\bm{H}^{b^*}_{a^*}$ between aircraft $b^*$ and aircraft $a^*$ based on the pilots sent
 by $b^*$.
\item [2)] Aircraft $a^*$ acquires the channel $\bm{H}_{b^*}^{a^*}$ for data transmission
 by exploiting the channel reciprocity, and generates the RZF-TPC matrix $\bm{V}_{b^*}^{a^*}$
 according to (\ref{eq11}) and (\ref{eq12}).
\item[3)] Based on the distance $d_{b^*}^{a^*}$ between aircraft $a^*$ and $b^*$
 measured by its DME, aircraft $a^*$ selects an ACM mode for data transmission according to
\begin{align}\label{eq72}
 & \text{If } d_k\le d_{b^*}^{a^*} < d_{k-1}: \text{ choose mode } k , \, k \in \{1,2,\cdots , K\} .
\end{align}
\end{itemize}
 Note that the scenarios of $d_{b^*}^{a^*}\ge D_{\max}$ and $d_{b^*}^{a^*}\le D_{\min}$ are
 not considered, since there is no available communication link, when two aircraft are beyond
 the maximum communication range, while the minimum flight-safety separation must be maintained.

\begin{table*}[bp!]
\vspace*{-6mm}
\caption{A design example of RZF-TPC aided and distance-based ACM with $N_t=32$ and $N_r=4$.
 The other system parameters for this ACM are listed in Table~\ref{Tab3}.}
\vspace*{-8mm}
\begin{center}
\begin{tabular}{|C{1.2cm}|C{1.3cm}|C{1.3cm}|C{2.3cm}|C{2.3cm}|C{2.3cm}|C{2.3cm}|}
\hline
 Mode $k$ & Modulation & Code rate & Spectral efficiency (bps/Hz) & Switching threshold $d_k$ (km)
   & Data rate per receive antenna (Mbps) & Total data rate (Mbps) \\ \hline
 1 & BPSK   & 0.488 & 0.459 & 550  & 2.756  & 11.023 \\ \hline
 2 & QPSK   & 0.533 & 1.003 & 450  & 6.020  & 24.079 \\ \hline
 3 & QPSK   & 0.706 & 1.329 & 300  & 7.934  & 31.895 \\ \hline
 4 & 8-QAM  & 0.635 & 1.793 & 210  & 10.758 & 43.031 \\ \hline
 5 & 8-QAM  & 0.780 & 2.202 & 130  & 13.214 & 52.857 \\ \hline
 6 & 16-QAM & 0.731 & 2.752 & 90   & 16.512 & 66.048 \\ \hline
 7 & 16-QAM & 0.853 & 3.211 & 35   & 19.258 & 77.071 \\ \hline
 8 & 32-QAM & 0.879 & 4.137 & 5.56 & 24.819 & 99.275 \\ \hline
\end{tabular}
\end{center}
\label{Tab1}
\vspace*{-2mm}
\end{table*}

\begin{table*}[tp!]
\caption{A design example of RZF-TPC aided and distance-based ACM with $N_t=64$ and $N_r=4$.
 The other system parameters for this ACM are listed in Table~\ref{Tab3}.}
\vspace*{-8mm}
\begin{center}
\begin{tabular}{|C{1.2cm}|C{1.3cm}|C{1.3cm}|C{2.3cm}|C{2.3cm}|C{2.3cm}|C{2.3cm}|}
\hline
 Mode $k$ & Modulation & Code rate & Spectral efficiency (bps/Hz) & Switching threshold $d_k$ (km)
   & Data rate per receive antenna (Mbps) & Total data rate (Mbps) \\ \hline
 1 & QPSK   & 0.706 & 1.323 & 500  & 7.974  & 31.895  \\ \hline
 2 & 8-QAM  & 0.642 & 1.813 & 400  & 10.876 & 43.505  \\ \hline
 3 & 8-QAM  & 0.780 & 2.202 & 300  & 13.214 & 52.857  \\ \hline
 4 & 16-QAM & 0.708 & 2.665 & 190  & 15.993 & 63.970   \\ \hline
 5 & 16-QAM & 0.853 & 3.211 & 90   & 19.268 & 77.071  \\ \hline
 6 & 32-QAM & 0.831 & 3.911 & 35   & 23.464 & 93.854  \\ \hline
 7 & 64-QAM & 0.879 & 4.964 & 5.56 & 29.783 & 119.130 \\ \hline
\end{tabular}
\end{center}
\label{Tab2}
\vspace*{-10mm}
\end{table*}

\begin{figure*}[bp!]
\vspace*{-7mm}
\begin{center}
\subfigure[$N_t = 32, N_r = 4$]{\includegraphics[width=0.47\textwidth]{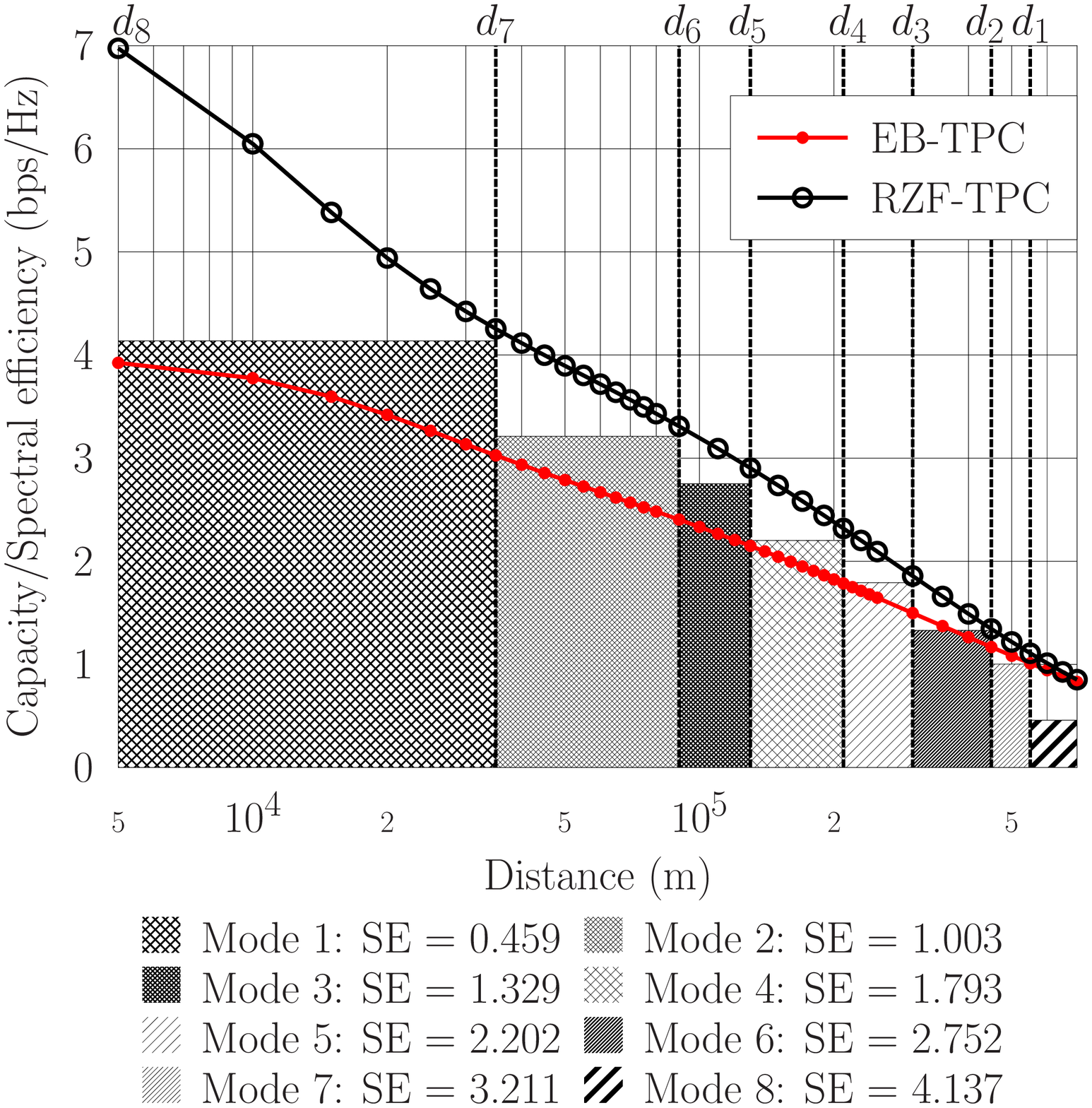} 
\label{FIG2a}}
\subfigure[$N_t = 64, N_r = 4$]{\includegraphics[width=0.47\textwidth]{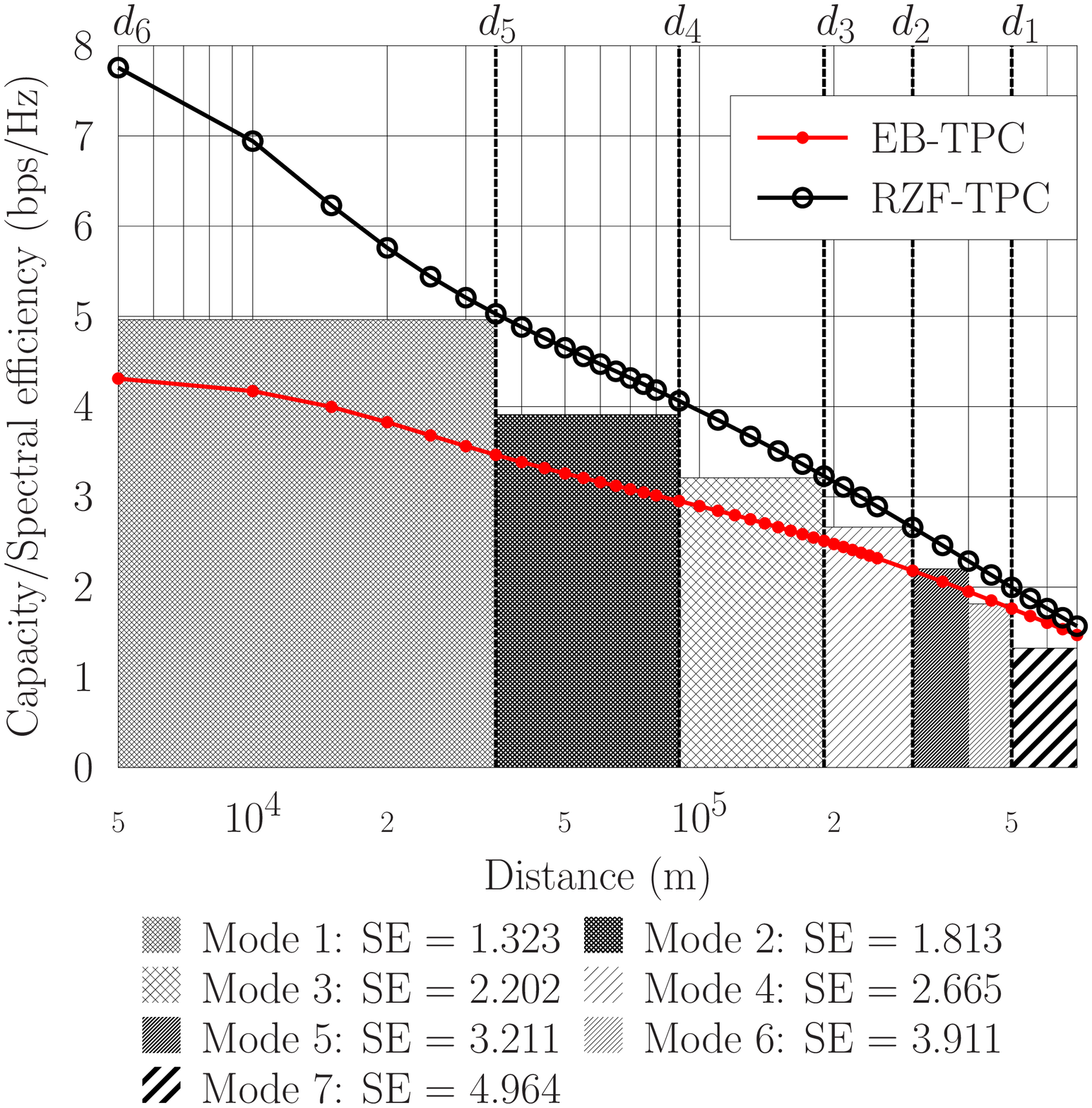}
\label{FIG2b}}
\end{center}
\vspace{-6mm}
\caption{Two examples of designing the RZF-TPC aided and distance-based ACM scheme. The
 switching thresholds and achievable system throughputs for (a) and (b) are listed in 
 Tables~\ref{Tab1} and \ref{Tab2}, respectively.}
\label{FIG2}
\vspace{-2mm}
\end{figure*}

 Tables~\ref{Tab1} and \ref{Tab2} provide two design examples of the RZF-TPC aided
 and  distance-based ACM in conjunction with $(N_t,N_r)=(32,4)$ and $(N_t,N_r)=(64,4)$,
 respectively. The RZF-TPC aided and distance-based ACM consists of $K$ ACM modes for
 providing $K$ data rates. The modulation schemes and code rates are selected from the
 second generation VersaFEC \cite{VersaFEC}, which is well designed to provide high
 performance and low latency ACM. The SE of mode $k$, ${\rm SE}_k$, is given by
\begin{align}\label{eq73}
 {\rm SE}_k =& r_c\log_2(M)\frac{N}{N + N_{\rm cp}} ,
\end{align}
 where $M$ is the modulation order, $r_c$ is the coding rate and the data rate $r_{{\rm pDRA},k}$ per
 DRA of mode $k$, is given by
\begin{align}\label{eq74}
 r_{{\rm pDRA},k} =& B_{\rm total} \cdot {\rm SE}_k ,
\end{align}
 while the total data rate $r_{{\rm total},k}$ of mode $k$, is given by
\begin{align}\label{eq75}
 r_{{\rm total},k} =& N_r \cdot r_{{\rm pDRA},k}  .
\end{align}
 More explicitly, Fig.~\ref{FIG2a} illustrates how the $K=8$ ACM modes are designed for
 the example of Table~\ref{Tab1}. Explicitly, the 8 switching thresholds of Table~\ref{Tab1}
 are determined so that the SEs of the corresponding ACM modes is just below the SE curve
 of the RZF-TPC. The $K=7$ ACM modes of Table~\ref{Tab2} are similarly determined, as 
 illustrated in Fig.~\ref{FIG2b}. Fig.~\ref{FIG2} also confirms that the RZF-TPC aided and
 distance-based ACM scheme significantly outperforms the EB-TPC aided and distance-based
 ACM scheme of \cite{zhang2017adaptive}.

\begin{figure}[bp!]
\vspace*{-8mm}
\begin{center}
 \includegraphics[width=0.5\textwidth]{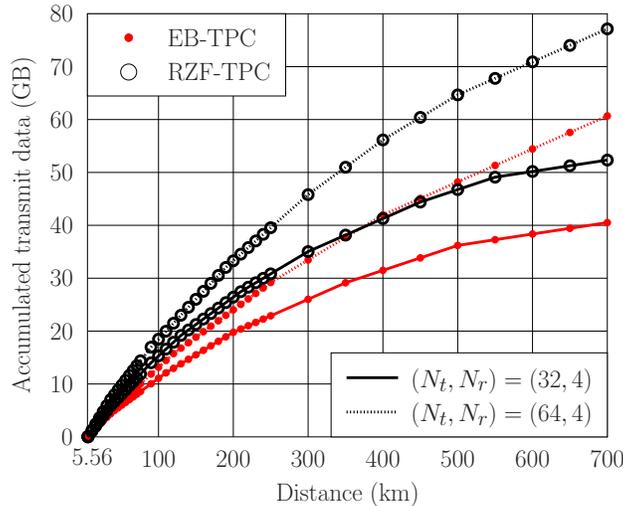} 
 \end{center}
\vspace{-12mm}
\caption{The accumulated transmitted data volume as the function of the distance
 $d_{b^*}^{a^*}$ between communicating aircraft $a^{*}$ and $b^{*}$. The  distances
 between the interfering aircraft and the desired receiving aircraft are uniformly
 distributed within the range of $\big[d_{b^*}^{a^*},~ D_{\max}\big]$, and other
 system parameters are specified in Table~\ref{Tab3}.}
\label{FIG3}
\vspace*{-3mm}
\end{figure}

 To further compare the RZF-TPC aided and distance-based ACM to the EB-TPC aided and
 distance-based ACM of \cite{zhang2017adaptive}, we quantitatively calculate their
 accumulated transmitted data volume $C_{\rm acc}$ that can be exchanged between
 aircraft $a^*$ and aircraft $b^*$ both flying at a typical cruising speed of $v=920$\,km/h
 in the opposite direction from the minimum separation distance $d_{\min}=5.56$\,km to
 the maximum communication distance $d_{\max}=740$\,km. Let the current distance
 $d^{a^*}_{b^*}$ between aircraft $a^*$ and aircraft $b^*$ be $d^{a^*}_{b^*}\in
 \big(d_{K-(k^*-1)}, ~ d_{K-k^*}\big]$. The accumulated data volume transmitted from
 $a^*$ to $b^*$ for $d_{K-(k^*-1)} < d^{a^*}_{b^*} \le d_{K-k^*}$ can be calculated by
\begin{align}\label{eq76}
 C_{\rm acc}\big(d^{a^*}_{b^*}\big) =& \frac{ d^{a^*}_{b^*} - d_{K- k^*}}{2v}
  r_{{\rm total},(K-k^*)} + \sum\limits_{k=1}^{k^*}\frac{d_{K-k} - d_{K-(k-1)}}{2v}
  r_{{\rm total},(K-(k-1))} .
\end{align}
 The accumulated transmitted data volumes expressed in gigabyte (GB) of the RZF-TPC
 aided and distance-based ACM are compared with the EB-TPC aided and distance-based ACM
 in Fig.~\ref{FIG3}, for $(N_t,N_r)=(32,4)$ and $(N_t,N_r)=(64,4)$. As expected, the
 achievable accumulated transmitted data volume of the RZF-TPC aided and distance-based
 ACM is significantly higher than that of the EB-TPC aided and distance-based ACM. In
 particular, when aircraft $a^*$ and $b^*$ fly over the communication distance, from
 $d_{b^*}^{a^*}=5.56$\,km to $d_{b^*}^{a^*}=740$\,km  taking a period of about 24
 minutes, the RZF-TPC aided and distance-based ACM associated with $(N_t,N_r)=(64,4)$
 is capable of transmitting a total of about 77\,GB of data, while the EB-TPC aided and
 distance-based ACM with $(N_t,N_r) = (64,4)$ is only capable of transmitting about
 60\,GB of data. Note that (\ref{eq76}) can be revised to include any other scenario,
 by introducing the angle of bearing between two aircraft and their heading direction.

\begin{table}[bp!]
\vspace*{-7mm}
\caption{Default system parameters of the simulated AANET.}
\vspace*{-9mm}
\begin{center}
\begin{tabular}{l|l}
\hline\hline
 Number of interference aircraft $A$ & 4 \\ \hline
 Number of DRAs $N_r$ & 4 \\ \hline
 Number of DTAs $N_t$ & 32 \\ \hline
 Transmit power per antennas $P_t$ & 1 watt \\ \hline
 Number of total subcarriers $N$ & 512 \\ \hline
 Number of CPs $N_{\rm cp}$ & 32 \\ \hline
 Rician factor $K_{\rm Rice}$ & 5 \\ \hline
 System bandwidth $B_{\rm total}$ & 6 MHz \\ \hline
 Carrier frequency & 5 GHz \\ \hline
 Correlation factor between antennas $\rho$ & 0.1 \\ \hline
 Noise figure at receiver $F$ & 4 dB \\ \hline
 Distance $d_{b^*}^{a^*}$ between communicating aircraft $a^*$ and $b^*$ & 10 km \\ \hline
 Minimum communication distance $D_{\min}$  & 5.56 km \\ \hline
 Maximum communication distance $D_{\max}$  & 740 km \\ \hline\hline
\end{tabular}
\end{center}
\label{Tab3}
\vspace*{-3mm}
\end{table}

\section{Simulation Study}\label{S5}

 To further evaluate the achievable performance of the proposed RZF-TPC aided
 and distance-based ACM scheme as well as to investigate the impact of the key
 system parameters, we consider an AANET consisting of $(A+2)$ aircraft, with
 two desired communicating aircraft and $A$ interfering aircraft. Each aircraft
 is equipped with $N_t$ DTAs and $N_r$ DRAs. The network is allocated
 $B_{\rm total}=6$\,MHz bandwidth at the carrier frequency of 5\,GHz. This
 bandwidth is reused by every aircraft and it is divided into $N=512$ subcarriers.
 The CP samples are $N_{\rm cp}=32$. The transmit power per antenna is
 $P_t=1$\,Watt. The default system parameters are summarized in Table~\ref{Tab3}.
 Unless otherwise specified, these default parameters are used. The deterministic
 part of the Rician channel is generated according to the model given in
 \cite{jin2010low}, which satisfies $\text{Tr}\left\{\bm{H}_{{\rm d},b}^a\big(
 \bm{H}_{{\rm d},b}^a\big)^{\rm H}\right\}=N_t N_r$. The scattering component of
 the Rician channel $\bm{H}_{\text{r},b}^a\in \mathbb{C}^{N_r\times N_t}$ is
 generated according to (\ref{eq3}). As mentioned previously, the DRAs are
 uncorrelated and, therefore, we have $\bm{R}_b=\bm{I}_{N_r}$. The spatial
 correlation matrix of the DTAs is generated according to
\begin{align}\label{eq78}
 \bm{R}^a\lvert_{[m,n]} =& \big(\bm{R}^a\lvert_{[n,m]}\big)^{\ddag} =(t\rho)^{|m-n|} , \, 1\le n,m\le N_t ,
\end{align}
 where $( \cdot )^{\ddag}$ denotes the conjugate operator, $t$ is a complex number
 with $|t|=1$ and $\rho$ is the magnitude of the correlation coefficient that is
 determined by the antenna element spacing \cite{lee2015antenna}.

 In the following investigation of the achievable throughput by the RZF-TPC aided
 and distance-based ACM scheme, `Theoretical results' are the throughputs calculated
 using (\ref{eq18}) using the perfect knowledge of $\left[\bm{M}_{b^a}^a\right]_{(n_r,n_r)}$
 and $\left[\bm{M}_{b^*}^a\right]_{(n_r^*,n_r^*)}$ in (\ref{eq68}), and the
 `Approximate results' are the throughputs calculated using (\ref{eq18}) with both
 $\left[\bm{M}_{b^a}^a \right]_{(n_r, n_r)}$ and $\left[\bm{M}_{b^*}^a\right]_{(n_r^*,n_r^*)}$
 substituted by $\left[\bm{M}_{b^*}^{a^*}\right]_{(n_r^*,n_r^*)}$ in (\ref{eq68}),
 while the `Simulation results' represent the Monte-Carlo simulation results. For the
 EB-TPC aided and distance-based ACM scheme, the `Theoretical results', `Approximate
 results' and `Simulation results' are defined similarly.

\begin{figure}[btp!]
\vspace*{-4mm}
\begin{center}
 \includegraphics[width=0.5\textwidth]{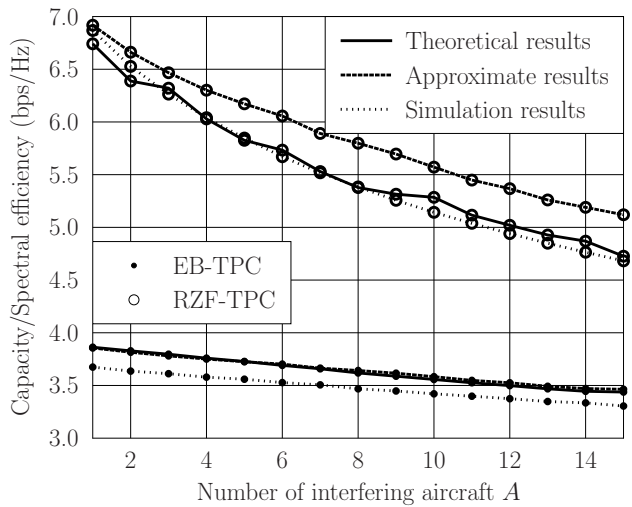} 
\end{center}
\vspace*{-11mm}
\caption{The achievable throughput per DRA as the function of the number of interfering
 aircraft $A$. The distances between the interfering aircraft and the desired receiving
 aircraft are uniformly distributed within the range of $\big[d_{b^*}^{a^*}, ~ D_{\max}\big]$,
 and the rest of the parameters are specified in Table~\ref{Tab3}.}
 \label{FIG4}
\vspace*{-4mm}
\end{figure}

 In Fig.~\ref{FIG4}, we investigate the achievable throughput per DRA as the function
 of the number of interfering aircraft $A$. Observe from Fig.~\ref{FIG4} that for the 
 RZF-TPC aided and distance-based ACM, the `Theoretical results' are closely matched by 
 the `Simulation results', which indicates that our theoretical analysis presented in
 Section~\ref{S3} is accurate. Furthermore, there is about 0.4 bps/Hz gap between the
 `Theoretical results' and the `Approximate results'. As expected, the achievable
 throughput degrades as the number of interfering aircraft increases. Moreover, the
 RZF-TPC aided and distance-based ACM scheme is capable of achieving significantly
 higher SE than the EB-TPC aided and distance-based ACM scheme. 

\begin{figure}[tbp!]
\vspace*{-4mm}
\begin{center}
\includegraphics[width=0.5\textwidth]{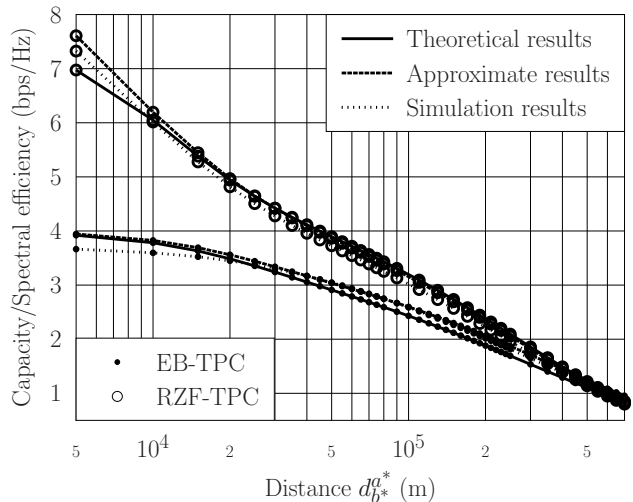} 
\end{center}
\vspace{-11mm}
\caption{The achievable throughput per DRA as a function of the distance $d_{b^*}^{a^*}$
 between the communicating aircraft $a^*$ and $b^*$. The distances between the interfering
 aircraft and the desired receiving aircraft are uniformly distributed within the range of
 $\big[d_{b^*}^{a^*},~ D_{\max}\big]$, and the rest of the parameters are specified in
 Table~\ref{Tab3}.}
 \label{FIG5}
\vspace{-4mm}
\end{figure}

 Fig.~\ref{FIG5} portrays the achievable throughput per DRA as the function of the distance
 $d_{b^*}^{a^*}$ between the communicating aircraft $a^*$ and $b^*$. Compared to the EB-TPC
 aided and distance-based ACM, the RZF-TPC aided and distance-based ACM is capable of
 achieving significantly higher SE, particularly at shorter distances. At the minimum
 distance of $d_{b^*}^{a^*}=5.56$\,km, the SE improvement is about 3\,bps/Hz, but the
 SE improvement becomes lower as the distance becomes longer. When the distance approaches
 the maximum communication range of 740\,km, both the schemes have a similar SE.

\begin{figure}[tbp!]
\vspace{-2mm}
\begin{center}
\includegraphics[width=0.5\textwidth]{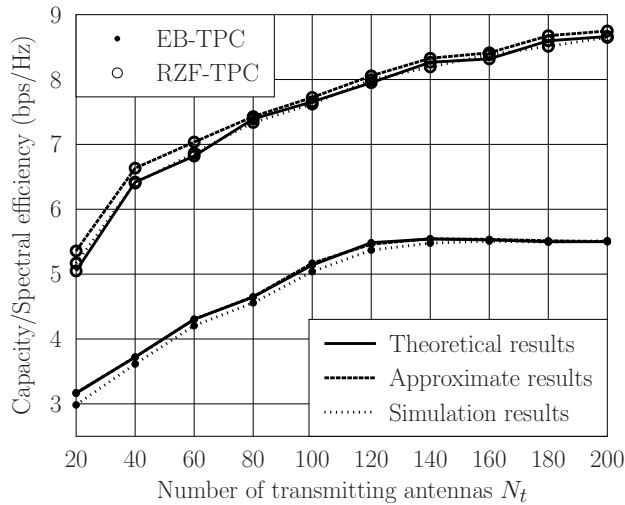}  
\end{center}
\vspace{-11mm}
\caption{The achievable throughput per DRA as the function of the number of DTAs $N_t$.
 The distances between the interfering aircraft and the desired receiving aircraft are
 uniformly distributed within the range of $\big[d_{b^*}^{a^*},~ D_{\max}\big]$, and
 the rest of the parameters are specified in Table~\ref{Tab3}.}
 \label{FIG6}
\vspace{-8mm}
\end{figure}

 Fig.~\ref{FIG6} shows the impact of the number of DTAs $N_t$ on the achievable throughput.
 As expected, the achievable throughput increases upon increasing $N_t$. Observe from
 Fig.~\ref{FIG6} that for the RZF-TPC aided and distance-based ACM, the `Theoretical
 results' are closely matched by the `Simulation results', while the `Theoretical results'
 are closely matched by the `Approximate results', when $N_t\ge 80$, but there exists a
 small gap between the `Theoretical results' and the `Approximate result' for $N_t < 80$.
 It can also be seen that the RZF-TPC aided and distance-based ACM achieves approximately
 3.0\,bps/Hz SE improvement over the EB-TPC aided and distance-based ACM.

\begin{figure}[tbp!]
\vspace{-3mm}
\begin{center}
\includegraphics[width=0.5\textwidth]{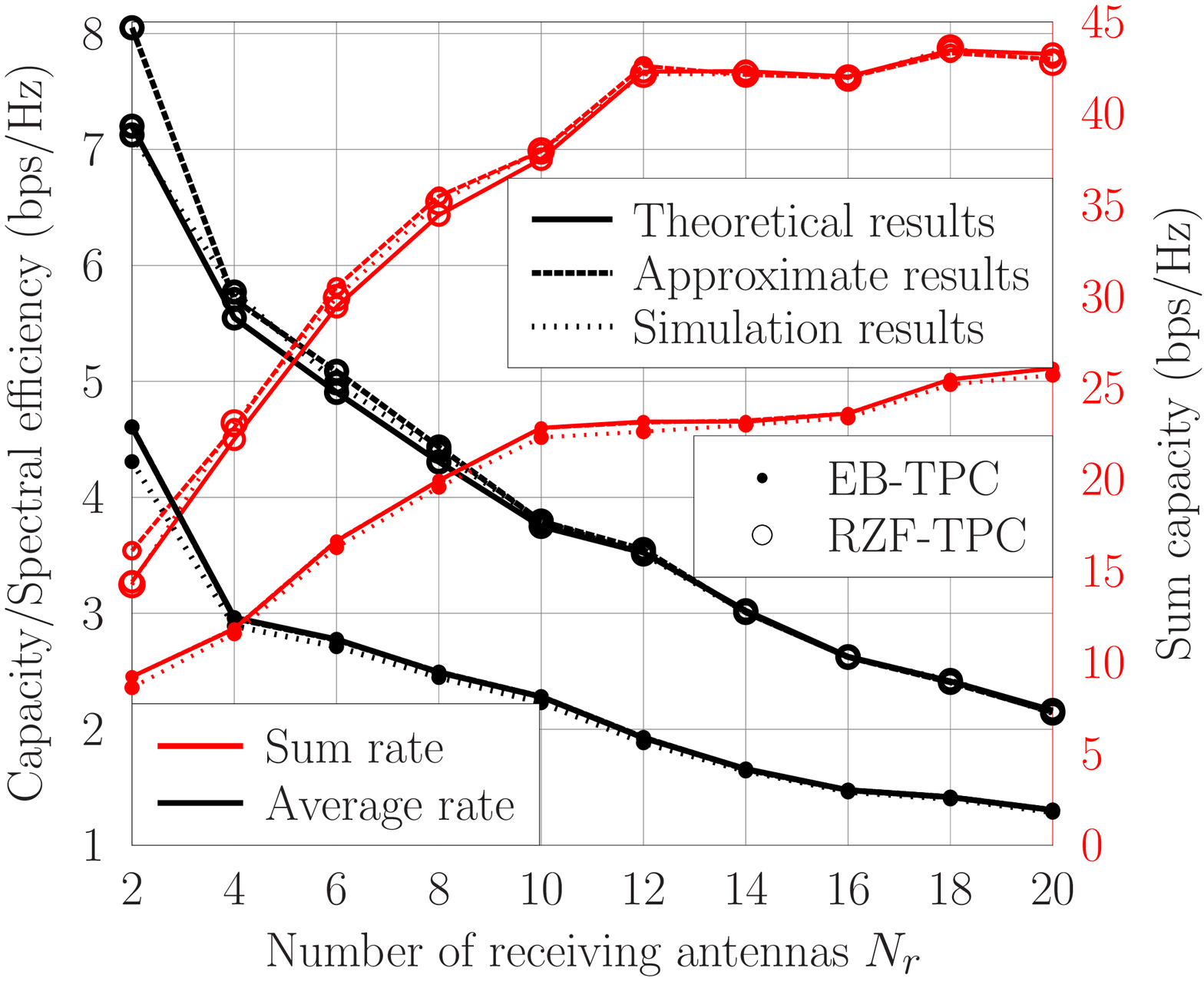} 
\end{center}
\vspace{-8mm}
\caption{The achievable throughput as the function of the number of DRAs $N_r$. The
 distances between the interfering aircraft and the desired receiving aircraft are
 uniformly distributed within the range of $\big[d_{b^*}^{a^*}, ~ D_{\text{max}}\big]$,
 and the rest of the parameters are specified in Table~\ref{Tab3}.}
\label{FIG7}
\vspace{-3mm}
\end{figure}

 The impact of the number of DRAs $N_r$ on the achievable throughput is studied in
 Fig.~\ref{FIG7}, where both the achievable throughput per antenna and the achievable
 sum rate of all the $N_r$ DRAs are plotted. Observe that the achievable throughput
 per antenna degrades upon increasing $N_r$, owing to the increase of the
 interference amongst the receive antennas. On the other hand, the achievable sum rate
 increases with $N_r$ due to the multiplexing gain. But the sum rate becomes 
 saturated for $N_r > 12$, because the increase in multiplexing gain is roughly cancelled
 by the increase of inter-antenna interference. Not surprisingly, the RZF-TPC aided and
 distance-based ACM significantly outperforms the EB-TPC aided and distance-based ACM.

\begin{figure}[htp!]
\vspace{-3mm}
\begin{center}
\includegraphics[width=0.5\textwidth]{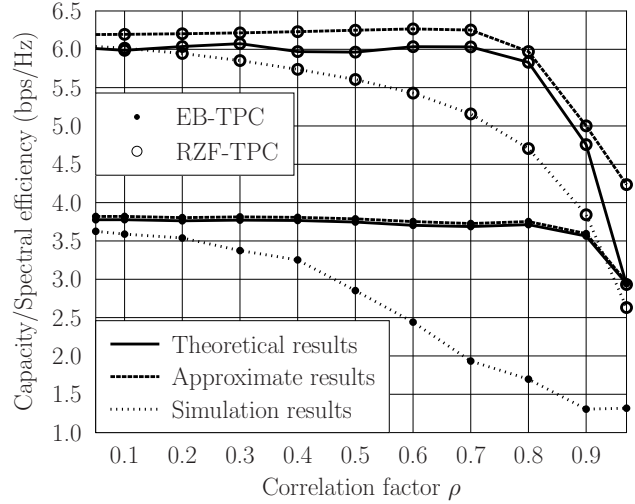} 
\end{center}
\vspace{-11mm}
\caption{The achievable throughput per DRA as the function of the correlation factor of
 DTAs $\rho$. The distances between the interfering aircraft and the desired receiving
 aircraft are uniformly distributed within the range of $\big[d_{b^*}^{a^*}, ~
 D_{\max}\big]$, and the rest of the parameters are specified in Table~\ref{Tab3}.}
\label{FIG8}
\vspace{-3mm}
\end{figure}

\begin{figure}[tbp!]
\vspace*{-1mm}
\begin{center}
\includegraphics[width=0.5\textwidth]{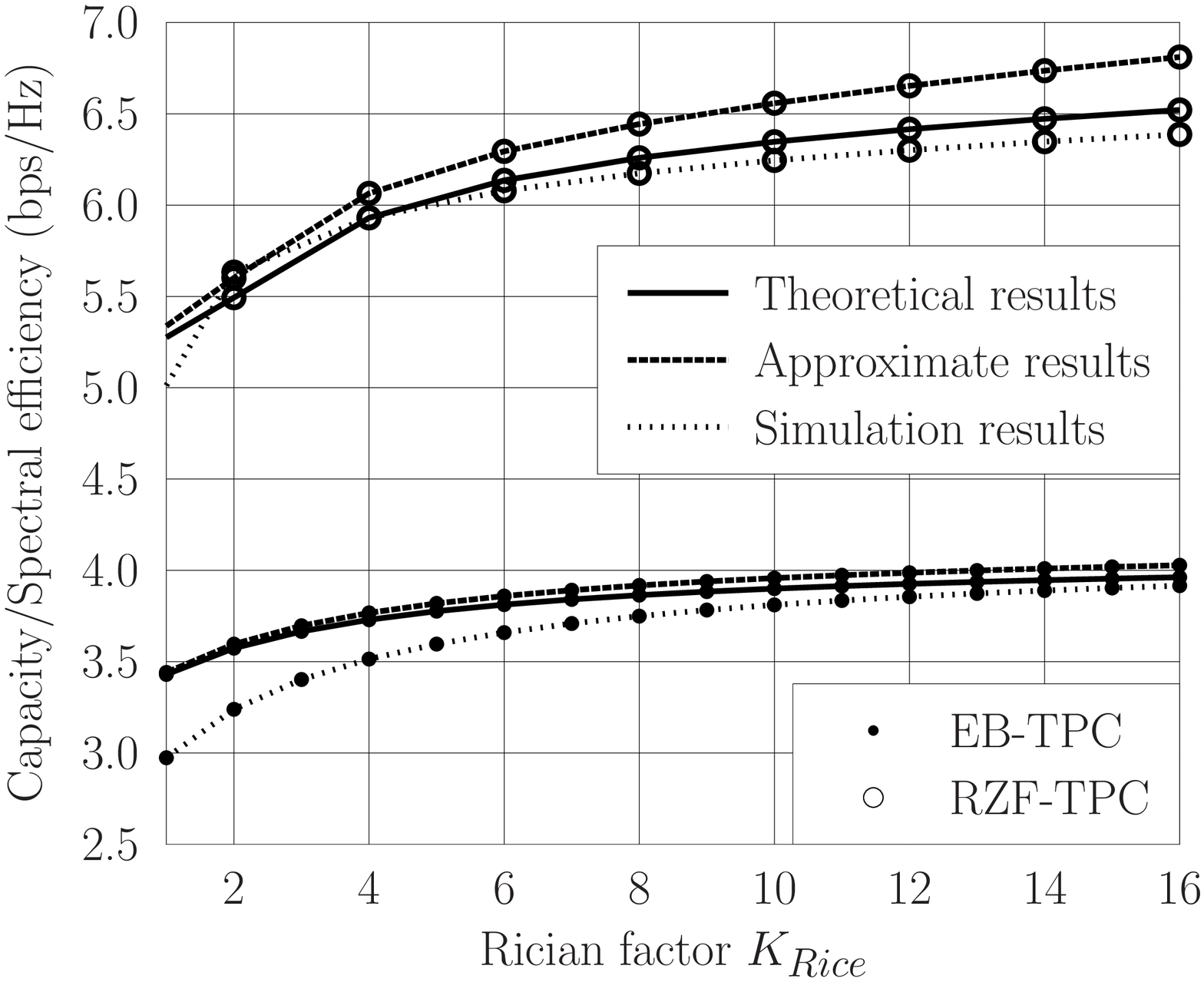} 
\end{center}
\vspace{-11mm}
\caption{The achievable throughput per DRA as the function of the Rician factor
 $K_{\rm Rice}$. The distances between the interfering aircraft and the desired
 receiving aircraft are uniformly distributed within the range of $\big[d_{b^*}^{a^*},
 ~ D_{\max}\big]$, and the rest of the parameters are specified in Table~\ref{Tab3}.}
\label{FIG9}
\vspace{-5mm}
\end{figure}

 The effect of the correlation factor $\rho$ of the DTAs on the achievable throughput
 per DRA is shown in Fig.~\ref{FIG8}. It can be observed that a higher correlation
 between DTAs results in lower achievable throughput. For the RZF-TPC aided and
 distance-based ACM,  the simulated throughput and the theoretical throughput are close
 for $\rho \le 0.4$, but there is a clear performance gap between the `Theoretical
 results' and the `Simulation results' for $\rho\ge 0.5$. For the EB-TPC aided and
 distance-based ACM, this performance gap between the `Theoretical results' and the
 `Simulation results' is even bigger and  it exists clearly over the range of
 $\rho\ge 0.3$. This indicates that for a higher correlation factor $\rho$, the simulated 
 SINR, which is the average over a number of realizations, may deviate considerably
 from  the theoretical SINR, which is the ensemble average. From Fig.~\ref{FIG8},
 it is clear that in addition to achieving a significantly better SE performance,
 the RZF-TPC aided and distance-based ACM can better deal with the problem caused by
 strong correlation among the DTAs than the EB-TPC aided and distance-based ACM.

 Fig.~\ref{FIG9} portrays the impact of the Rician factor $K_{\rm Rice}$ on the
 achievable throughput per DRA. It can be seen from Fig.~\ref{FIG9} that the achievable
 throughput per DRA increases upon increasing the Rician factor $K_{\rm Rice}$.
 Furthermore, the SE improvement of the RZF-TPC aided and distance-based ACM over the 
 EB-TPC aided and distance-based ACM also increases with $K_{\rm Rice}$. Specifically,
 the SE enhancement is about 1.9\,bps/Hz at $K_{\rm Rice}=2.0$ and this is increased
 to about 2.6\,bps/Hz for $K_{\rm Rice}=16.0$.

\section{Conclusions}\label{S6}

 A RZF-TCP aided and distance-based ACM scheme has been proposed for large-scale
 antenna array assisted aeronautical communications. For the design of powerful
 RZF-TCP, our theoretical contribution has been twofold. For the  first time, we have
 derived the analytical closed-form achievable data rate in the presence of both
 realistic channel estimation error and co-channel interference. Moreover, we have 
 explicitly derived the optimal regularization parameter that minimizes the
 mean-square detection error. With the aid of this closed-form data rate metric, we
 have designed a practical distance-based ACM scheme that switches its coding and
 modulation mode according to the distance between the communicating aircraft. Our
 extensive simulation study has quantified the impact of the key system parameters
 on the achievable throughput of the proposed RZF-TCP aided and distance-based ACM
 scheme. Our simulation results have confirmed the accuracy of our analytical results.
 Moreover, both our theoretical analysis and Monte-Carlo simulations have confirmed
 that the RZF-TCP aided and distance-based ACM scheme substantially outperforms our
 previous EB-TCP aided and distance-based ACM scheme. In the scenario where two
 communicating aircraft fly at a typical cruising speed of 920\,km/h in opposite
 direction all the way to the maximum horizon communication distance 740\,km, the
 RZF-TPC aided and distance-based ACM scheme is capable of transmitting about
 11.7\,GB and 16.5\,GB extra data volumes compared to EB-TCP aided and distance-based
 ACM scheme for the configurations of 32 DTAs/4 DRAs and 64 DTAs/4 DRAs, respectively.
 This study has therefore offered a practical high-rate, high-SE solution for air-to-air
 communications.

\appendix

\subsection{Gallery of Lemmas}\label{Apa}

\begin{lemma}[Matrix inversion lemma I \cite{silverstein1995empirical}]\label{L1}
 Given the Hermitian matrix $\bm{A}\in \mathbb{C}^{N\times N}$, vector $\bm{x}\in \mathbb{C}^N$
 and scalar $\tau\in \mathbb{C}$, if $\bm{A}$ and $\left(\bm{A}+\tau \bm{x}\bm{x}^{\rm H}\right)$
 are invertible, the following identity holds
\begin{align}\label{eA1}
 \left(\bm{A} + \tau \bm{x}\bm{x}^{\rm H}\right)^{-1} \bm{x} =&
  \frac{\bm{A}^{-1}\bm{x}}{1 + \tau \bm{x}^{\rm H}\bm{A}^{-1}\bm{x}} .
\end{align}
\end{lemma}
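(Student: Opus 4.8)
The plan is to prove (\ref{eA1}) by direct verification, which is shorter than first establishing the general Sherman--Morrison--Woodbury inverse and then specialising it. First I would isolate the scalar denominator $\beta \triangleq 1 + \tau\bm{x}^{\rm H}\bm{A}^{-1}\bm{x}$ and check that $\beta\neq 0$, so that the right-hand side of (\ref{eA1}) is well defined. The cleanest justification is the matrix determinant lemma, $\det\!\left(\bm{A}+\tau\bm{x}\bm{x}^{\rm H}\right)=\det(\bm{A})\,\beta$: since both $\bm{A}$ and $\bm{A}+\tau\bm{x}\bm{x}^{\rm H}$ are invertible by hypothesis, neither determinant vanishes, so $\beta\neq 0$. (Alternatively one may argue directly: if $\bm{x}=\bm{0}$ then $\beta=1$; and if $\bm{x}\neq\bm{0}$ while $\beta=0$, the computation below gives $\left(\bm{A}+\tau\bm{x}\bm{x}^{\rm H}\right)\bm{A}^{-1}\bm{x}=\beta\bm{x}=\bm{0}$ with $\bm{A}^{-1}\bm{x}\neq\bm{0}$, contradicting the invertibility of $\bm{A}+\tau\bm{x}\bm{x}^{\rm H}$.)

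The core step is a one-line computation. Exploiting that $\bm{x}^{\rm H}\bm{A}^{-1}\bm{x}$ is a scalar and may be moved past the vector $\bm{x}$, I would multiply $\bm{A}+\tau\bm{x}\bm{x}^{\rm H}$ by the candidate solution:
\begin{align}\label{eq-mil-proposal}
 \left(\bm{A} + \tau\bm{x}\bm{x}^{\rm H}\right)\frac{\bm{A}^{-1}\bm{x}}{\beta}
 =& \frac{1}{\beta}\left(\bm{x} + \tau\bm{x}\left(\bm{x}^{\rm H}\bm{A}^{-1}\bm{x}\right)\right) \nonumber \\
 =& \frac{1 + \tau\bm{x}^{\rm H}\bm{A}^{-1}\bm{x}}{\beta}\,\bm{x} = \bm{x} .
\end{align}
Since $\bm{A}+\tau\bm{x}\bm{x}^{\rm H}$ is invertible, left-multiplying (\ref{eq-mil-proposal}) by $\left(\bm{A}+\tau\bm{x}\bm{x}^{\rm H}\right)^{-1}$ yields $\bm{A}^{-1}\bm{x}/\beta = \left(\bm{A}+\tau\bm{x}\bm{x}^{\rm H}\right)^{-1}\bm{x}$, which is exactly (\ref{eA1}).

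There is no genuine obstacle here: the only point requiring a word of care is the non-vanishing of $\beta$, dispatched above, and everything else is a direct substitution. A self-contained alternative would be to quote the rank-one Sherman--Morrison inverse $\left(\bm{A}+\tau\bm{x}\bm{x}^{\rm H}\right)^{-1}=\bm{A}^{-1}-\tau\bm{A}^{-1}\bm{x}\bm{x}^{\rm H}\bm{A}^{-1}/\beta$, right-multiply by $\bm{x}$, and combine the two resulting terms over the common denominator $\beta$; I prefer the verification above since it does not presuppose the matrix-inversion identity essentially being proved, and it is the form in which the lemma is invoked repeatedly in Section~\ref{S3} (e.g.\ in (\ref{eq27})).
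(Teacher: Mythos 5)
Your proof is correct. Note, however, that the paper itself offers no proof of Lemma~\ref{L1}: it is stated as a known rank-one inversion identity and attributed to \cite{silverstein1995empirical}, so there is no in-paper argument to compare against; your direct verification is a sound, self-contained substitute. The two ingredients are exactly right: the scalar $\beta = 1+\tau\bm{x}^{\rm H}\bm{A}^{-1}\bm{x}$ is nonzero precisely because both $\bm{A}$ and $\bm{A}+\tau\bm{x}\bm{x}^{\rm H}$ are assumed invertible (either via the determinant identity $\det\left(\bm{A}+\tau\bm{x}\bm{x}^{\rm H}\right)=\det(\bm{A})\,\beta$ or via your contradiction argument), and the one-line computation $\left(\bm{A}+\tau\bm{x}\bm{x}^{\rm H}\right)\bm{A}^{-1}\bm{x}=\beta\bm{x}$ then gives (\ref{eA1}) after left-multiplication by the inverse. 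Your choice of direct verification over first quoting the full Sherman--Morrison formula is also the more economical route, since it does not presuppose the very identity being established; as a minor observation, the Hermitian hypothesis on $\bm{A}$ is never needed in your argument (nor in the identity itself), so it is merely inherited from the context in which the lemma is applied in Section~\ref{S3}.
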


\begin{lemma}[Matrix inversion lemma II \cite{silverstein1995empirical}]\label{L2}
 Given the Hermitian matrix $\bm{A}\in \mathbb{C}^{N\times N}$, vector $\bm{x}\in \mathbb{C}^N$
 and scalar $\tau\in \mathbb{C}$, if $\bm{A}$ and $\left(\bm{A}+\tau \bm{x}\bm{x}^{\rm H}\right)$
 are invertible, the following identity holds
\begin{align}\label{eA2}
 \left(\bm{A} + \tau \bm{x}\bm{x}^{\rm H}\right)^{-1} =& \bm{A}^{-1} +
  \frac{\bm{A}^{-1}\tau \bm{x}\bm{x}^{\rm H}\bm{A}^{-1}}{1 + \tau \bm{x}^{\rm H}\bm{A}^{-1}\bm{x}} .
\end{align}
\end{lemma}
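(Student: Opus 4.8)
The plan is to reduce the full matrix identity (\ref{eA2}) to the vector identity of Lemma~\ref{L1}, which has just been established, rather than verifying it from scratch. First I would write the sought inverse as a perturbation of $\bm{A}^{-1}$, setting $(\bm{A}+\tau\bm{x}\bm{x}^{\rm H})^{-1}=\bm{A}^{-1}+\bm{C}$ for an unknown correction matrix $\bm{C}$. Multiplying this ansatz on the right by $\bm{A}+\tau\bm{x}\bm{x}^{\rm H}$ and imposing that the product equal $\bm{I}_N$ yields a single matrix equation that determines $\bm{C}$ uniquely, expressing it through $\bm{A}^{-1}\bm{x}$, the row vector $\bm{x}^{\rm H}$, and the resolvent $(\bm{A}+\tau\bm{x}\bm{x}^{\rm H})^{-1}$ itself.

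The key step is then to eliminate the residual occurrence of the unknown inverse inside $\bm{C}$. For this I would invoke Lemma~\ref{L1}, together with its left-acting counterpart, to evaluate the row product $\bm{x}^{\rm H}(\bm{A}+\tau\bm{x}\bm{x}^{\rm H})^{-1}=\frac{\bm{x}^{\rm H}\bm{A}^{-1}}{1+\tau\bm{x}^{\rm H}\bm{A}^{-1}\bm{x}}$. Substituting this expression closes the equation for $\bm{C}$ and collapses it into exactly the rank-one correction appearing on the right-hand side of (\ref{eA2}). Throughout, the decisive simplification is that $\bm{x}^{\rm H}\bm{A}^{-1}\bm{x}$ is a \emph{scalar}, so it can be freely extracted from the matrix products and absorbed into the normalising denominator $1+\tau\bm{x}^{\rm H}\bm{A}^{-1}\bm{x}$.

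I expect the main obstacle to be the justification of the left-acting form of Lemma~\ref{L1}: since $\tau$ is allowed to be complex, $\bm{A}+\tau\bm{x}\bm{x}^{\rm H}$ need not be Hermitian, so I cannot obtain the row-vector version merely by conjugating Lemma~\ref{L1}. Instead I would establish it independently by checking that $\frac{\bm{x}^{\rm H}\bm{A}^{-1}}{1+\tau\bm{x}^{\rm H}\bm{A}^{-1}\bm{x}}(\bm{A}+\tau\bm{x}\bm{x}^{\rm H})=\bm{x}^{\rm H}$, a short computation that again hinges only on the scalar nature of $\bm{x}^{\rm H}\bm{A}^{-1}\bm{x}$ and on the denominator being nonzero, the latter being guaranteed by the assumed invertibility of $\bm{A}+\tau\bm{x}\bm{x}^{\rm H}$. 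Once this row identity is in hand, the remaining algebra is purely mechanical, and the uniqueness of $\bm{C}$ secured by the invertibility hypothesis certifies that the resulting expression is the genuine inverse.
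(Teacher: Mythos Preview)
The paper does not actually prove Lemma~\ref{L2}; it is simply stated with a citation to \cite{silverstein1995empirical} as a known identity (the Sherman--Morrison formula), so there is no argument in the paper to compare yours against. Your plan is a perfectly standard and correct route to the result: writing the inverse as $\bm{A}^{-1}+\bm{C}$, solving $\bm{C}(\bm{A}+\tau\bm{x}\bm{x}^{\rm H})=-\tau\bm{A}^{-1}\bm{x}\bm{x}^{\rm H}$, and then closing it with the row identity $\bm{x}^{\rm H}(\bm{A}+\tau\bm{x}\bm{x}^{\rm H})^{-1}=\bm{x}^{\rm H}\bm{A}^{-1}/(1+\tau\bm{x}^{\rm H}\bm{A}^{-1}\bm{x})$ works exactly as you describe, and your care about the non-Hermitian case and the nonvanishing of the denominator is well placed.

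One point worth flagging: carrying your computation through gives
\[
\bm{C}=-\,\frac{\tau\,\bm{A}^{-1}\bm{x}\bm{x}^{\rm H}\bm{A}^{-1}}{1+\tau\,\bm{x}^{\rm H}\bm{A}^{-1}\bm{x}},
\]
i.e.\ a \emph{minus} sign in front of the rank-one correction, not the plus sign printed in (\ref{eA2}). That is a typo in the paper's statement, not a flaw in your argument; indeed, when the lemma is actually applied in (\ref{eq52}) the correct minus sign is used. So your derivation will not ``collapse into exactly'' the displayed right-hand side of (\ref{eA2}) as written, but it will produce the correct Sherman--Morrison identity that the paper in fact relies on.
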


\begin{lemma}[\cite{zhang2017adaptive}]\label{L3}
 Let $\bm{A}\in \mathbb{C}^{N \times N}$ and $\bm{x}\sim \mathcal{CN}\left(\frac{1}{\sqrt{N}}\bm{m},
 \frac{1}{N}\bm{\Upsilon}\right)$, where $\frac{1}{\sqrt{N}}\bm{m}\in \mathbb{C}^N$ and
 $\frac{1}{N}\bm{\Upsilon}\in \mathbb{C}^{N \times N}$ are the mean vector and the covariance matrix
 of the random vector $\bm{x}\in \mathbb{C}^N$, respectively. Assuming that $\bm{A}$ has a uniformly
 bounded spectral norm with respect to $N$ and $\bm{x}$ is independent of $\bm{A}$,   we have
\begin{align}\label{eA3}
 \lim_{N\to \infty}\bm{x}^{\rm H}\bm{A}\bm{x} =& \text{Tr}\left\{\left(\frac{1}{N}\bm{M} +
 \frac{1}{N}\bm{\Upsilon}\right)\bm{A}\right\} ,
\end{align}
 where $\bm{M}=\bm{m}\bm{m}^{\rm H}$.
\end{lemma}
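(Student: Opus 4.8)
The plan is to split $\bm{x}$ into its deterministic mean and a zero-mean Gaussian fluctuation, so that $\bm{x}^{\rm H}\bm{A}\bm{x}$ decomposes into a deterministic quadratic form, two cross terms, and a random Gaussian quadratic form; the deterministic form then supplies the $\frac{1}{N}\bm{M}$ contribution, the mean of the random form supplies the $\frac{1}{N}\bm{\Upsilon}$ contribution, and everything else is shown to be asymptotically negligible. Concretely, I would write $\bm{x}=\frac{1}{\sqrt{N}}\bm{m}+\bm{y}$ with $\bm{y}\sim\mathcal{CN}\big(\bm{0}_N,\frac{1}{N}\bm{\Upsilon}\big)$, so that $\bm{x}^{\rm H}\bm{A}\bm{x}=\frac{1}{N}\bm{m}^{\rm H}\bm{A}\bm{m}+\frac{1}{\sqrt{N}}\big(\bm{m}^{\rm H}\bm{A}\bm{y}+\bm{y}^{\rm H}\bm{A}\bm{m}\big)+\bm{y}^{\rm H}\bm{A}\bm{y}$. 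The first term is deterministic and, by the cyclic property of the trace, equals $\text{Tr}\big\{\big(\frac{1}{\sqrt{N}}\bm{m}\big)\big(\frac{1}{\sqrt{N}}\bm{m}\big)^{\rm H}\bm{A}\big\}=\text{Tr}\big\{\frac{1}{N}\bm{M}\bm{A}\big\}$, i.e.\ the first half of the claimed limit. Since $\bm{A}$ may itself be random but is, by hypothesis, independent of $\bm{x}$, all expectations below are understood conditionally on $\bm{A}$, with the uniform bound $\|\bm{A}\|\le C$ treated as a deterministic constant.

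Next I would dispose of the cross term and identify the mean of the Gaussian form. The cross term has zero mean because $\bm{y}$ is zero-mean; its variance is, up to a constant, $\frac{1}{N}\bm{m}^{\rm H}\bm{A}\big(\frac{1}{N}\bm{\Upsilon}\big)\bm{A}^{\rm H}\bm{m}$, obtained from $\mathcal{E}\{|\bm{m}^{\rm H}\bm{A}\bm{y}|^2\}=\bm{m}^{\rm H}\bm{A}\,\mathcal{E}\{\bm{y}\bm{y}^{\rm H}\}\,\bm{A}^{\rm H}\bm{m}$ and the analogous mixed moment, and under the regularity inherent to the deterministic-equivalent framework (bounded $\|\bm{A}\|$, well-behaved norms of $\frac{1}{\sqrt{N}}\bm{m}$ and $\frac{1}{N}\bm{\Upsilon}$) this is of strictly lower order than the $\textsf{O}(N)$ deterministic term; hence the cross term is mean-square negligible. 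For the Gaussian quadratic form, $\mathcal{E}\{\bm{y}^{\rm H}\bm{A}\bm{y}\}=\text{Tr}\big\{\bm{A}\,\mathcal{E}\{\bm{y}\bm{y}^{\rm H}\}\big\}=\text{Tr}\big\{\frac{1}{N}\bm{\Upsilon}\bm{A}\big\}$, which is the remaining half of the target, so it only remains to show that $\bm{y}^{\rm H}\bm{A}\bm{y}$ concentrates on this mean.

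That concentration is the one genuinely nontrivial step, and it is the classical trace lemma (\cite{silverstein1995empirical}, which also underpins the zero-mean argument imported from \cite{zhang2017adaptive}). I would establish it by writing $\bm{y}=\big(\frac{1}{N}\bm{\Upsilon}\big)^{1/2}\bm{z}$ with $\bm{z}$ having i.i.d.\ $\mathcal{CN}(0,1)$ entries and setting $\bm{B}=\big(\frac{1}{N}\bm{\Upsilon}\big)^{1/2}\bm{A}\big(\frac{1}{N}\bm{\Upsilon}\big)^{1/2}$, so that $\bm{y}^{\rm H}\bm{A}\bm{y}=\bm{z}^{\rm H}\bm{B}\bm{z}$ with $\mathcal{E}\{\bm{z}^{\rm H}\bm{B}\bm{z}\}=\text{Tr}\,\bm{B}=\text{Tr}\big\{\frac{1}{N}\bm{\Upsilon}\bm{A}\big\}$. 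The Gaussian fourth-moment (Wick) identity then gives the exact fluctuation variance $\mathcal{E}\big\{\big|\bm{z}^{\rm H}\bm{B}\bm{z}-\text{Tr}\,\bm{B}\big|^2\big\}=\text{Tr}\{\bm{B}\bm{B}^{\rm H}\}$, which the regularity assumptions render of lower order than $(\text{Tr}\,\bm{B})^2$; hence $\bm{y}^{\rm H}\bm{A}\bm{y}$ concentrates on its mean in mean square and, via Markov's inequality together with a Borel--Cantelli argument along $N$, almost surely. Adding the three contributions yields $\bm{x}^{\rm H}\bm{A}\bm{x}\to\text{Tr}\big\{\frac{1}{N}\bm{M}\bm{A}\big\}+\text{Tr}\big\{\frac{1}{N}\bm{\Upsilon}\bm{A}\big\}=\text{Tr}\big\{\big(\frac{1}{N}\bm{M}+\frac{1}{N}\bm{\Upsilon}\big)\bm{A}\big\}$. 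The main obstacle is precisely this quadratic-form concentration, together with the norm control on $\bm{B}$ that keeps $\text{Tr}\{\bm{B}\bm{B}^{\rm H}\}$ in check; everything else is routine bookkeeping. A subtlety worth making explicit in the write-up is that $\bm{M}=\bm{m}\bm{m}^{\rm H}$ and $\bm{\Upsilon}$ both vary with $N$, so the displayed identity is to be read in the deterministic-equivalent sense used throughout Section~\ref{S3} --- the difference between its two sides, measured against their common order of magnitude, tends to zero --- rather than as convergence to a fixed scalar.
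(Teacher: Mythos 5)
Your decomposition $\bm{x}=\frac{1}{\sqrt{N}}\bm{m}+\bm{y}$ --- deterministic term giving $\text{Tr}\{\frac{1}{N}\bm{M}\bm{A}\}$, cross terms vanishing, and trace-lemma concentration of $\bm{y}^{\rm H}\bm{A}\bm{y}$ on $\text{Tr}\{\frac{1}{N}\bm{\Upsilon}\bm{A}\}$ --- is exactly the argument the paper relies on: Lemma~\ref{L3} is imported from \cite{zhang2017adaptive}, and the paper's own proof of the companion Lemma~\ref{L4} uses the same mean-plus-fluctuation split, invoking Lemma~1 of \cite{fernandes2013inter} for the cross terms and the trace lemma of \cite{hoydis2012random} for the fluctuation, so your proposal is correct and essentially identical in approach. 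The only caveat is that your variance bound $\text{Tr}\{\bm{B}\bm{B}^{\rm H}\}=\textsf{O}(1/N)$ is not summable, so the almost-sure conclusion via Borel--Cantelli requires a fourth-moment rather than second-moment Markov bound --- a standard refinement that does not change the argument.
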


\begin{lemma}\label{L4}
 Let $\bm{A}\in \mathbb{C}^{N \times N}$, and two independent random vectors $\bm{x}\in \mathbb{C}^N$
 and $\bm{y}\in \mathbb{C}^N$ have the distributions $\bm{x}\sim \mathcal{CN}\left(\frac{1}{\sqrt{N}}\bm{m}_x,
 \frac{1}{N}\bm{\Upsilon}_x\right)$ and $\bm{y}\sim \mathcal{CN}\left(\frac{1}{\sqrt{N}}\bm{m}_y,
 \frac{1}{N}\bm{\Upsilon}_y\right)$, where $\frac{1}{\sqrt{N}}\bm{m}_x\in \mathbb{C}^N$ and
 $\frac{1}{\sqrt{N}}\bm{m}_y\in \mathbb{C}^N$ are the mean vectors, while $\frac{1}{N}\bm{\Upsilon}_x
 \in \mathbb{C}^{N \times N}$ and $\frac{1}{N}\bm{\Upsilon}_y\in \mathbb{C}^{N \times N}$ are 
 the covariance matrices of $\bm{x}$ and $\bm{y}$, respectively. Assuming that $\bm{A}$ has a
 uniformly bounded spectral norm with respect to $N$, and $\bm{x}$ and $\bm{y}$ are independent
 of $\bm{A}$, we have
\begin{align}\label{eA4}
 \lim_{N\to \infty}\bm{x}^{\rm H}\bm{A}\bm{y} =& \text{Tr}\left\{\frac{1}{N}\bm{M}_{xy}\bm{A}\right\} ,
\end{align}
 where $\bm{M}_{xy}=\bm{m}_{x}\bm{m}_{y}^{\rm H}$.
\end{lemma}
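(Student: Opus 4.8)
The plan is to split each random vector into its (deterministic) mean and a zero-mean Gaussian fluctuation and then exploit the bilinearity of $\bm{x}^{\rm H}\bm{A}\bm{y}$. Write $\bm{x}=\tfrac{1}{\sqrt{N}}\bm{m}_x+\widetilde{\bm{x}}$ and $\bm{y}=\tfrac{1}{\sqrt{N}}\bm{m}_y+\widetilde{\bm{y}}$, where $\widetilde{\bm{x}}\sim\mathcal{CN}\big(\bm{0}_N,\tfrac{1}{N}\bm{\Upsilon}_x\big)$ and $\widetilde{\bm{y}}\sim\mathcal{CN}\big(\bm{0}_N,\tfrac{1}{N}\bm{\Upsilon}_y\big)$ are mutually independent and independent of $\bm{A}$; it is convenient to whiten them as $\widetilde{\bm{x}}=\tfrac{1}{\sqrt{N}}\bm{\Upsilon}_x^{1/2}\bm{z}_x$ and $\widetilde{\bm{y}}=\tfrac{1}{\sqrt{N}}\bm{\Upsilon}_y^{1/2}\bm{z}_y$ with $\bm{z}_x,\bm{z}_y$ having i.i.d.\ $\mathcal{CN}(0,1)$ entries. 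Since $\bm{A}$ is independent of both vectors, I would condition on $\bm{A}$ and treat it as deterministic with uniformly bounded spectral norm $\|\bm{A}\|\le c_A$, and likewise use $\|\bm{\Upsilon}_x\|,\|\bm{\Upsilon}_y\|\le c_\Upsilon$. Bilinearity then yields
\begin{equation}
\bm{x}^{\rm H}\bm{A}\bm{y}=\tfrac{1}{N}\bm{m}_x^{\rm H}\bm{A}\bm{m}_y+\tfrac{1}{\sqrt{N}}\bm{m}_x^{\rm H}\bm{A}\widetilde{\bm{y}}+\tfrac{1}{\sqrt{N}}\widetilde{\bm{x}}^{\rm H}\bm{A}\bm{m}_y+\widetilde{\bm{x}}^{\rm H}\bm{A}\widetilde{\bm{y}},
\end{equation}
and the goal is to show that the last three terms vanish in the limit while the first equals $\text{Tr}\big\{\tfrac{1}{N}\bm{M}_{xy}\bm{A}\big\}$ by the elementary scalar--trace identity (a scalar equals its trace) together with the cyclic property of the trace, recalling $\bm{M}_{xy}=\bm{m}_x\bm{m}_y^{\rm H}$ and that the stated normalization forces $\|\bm{m}_x\|^2,\|\bm{m}_y\|^2=\textsf{O}(N)$.

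For the two ``cross-mean'' terms I would bound second moments directly: conditionally on $\bm{A}$, $\mathcal{E}\big\{\big|\tfrac{1}{\sqrt{N}}\bm{m}_x^{\rm H}\bm{A}\widetilde{\bm{y}}\big|^2\big\}=\tfrac{1}{N^2}\bm{m}_x^{\rm H}\bm{A}\bm{\Upsilon}_y\bm{A}^{\rm H}\bm{m}_x\le\tfrac{c_A^2c_\Upsilon}{N^2}\|\bm{m}_x\|^2=\textsf{O}(1/N)$, and symmetrically for $\tfrac{1}{\sqrt{N}}\widetilde{\bm{x}}^{\rm H}\bm{A}\bm{m}_y$ (using that $\widetilde{\bm{x}}$ is independent of $\bm{m}_y$ and $\bm{A}$). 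As both are mean-zero Gaussian expressions, their fourth moments are of order $\textsf{O}(1/N^2)$, so Chebyshev's inequality together with the Borel--Cantelli lemma gives almost-sure convergence to $0$ (convergence in probability would already suffice for the way the lemma is used in Section~\ref{S3}). This part runs along the same lines as the proof of Lemma~\ref{L3}.

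The genuinely new ingredient---and the step I expect to be the main obstacle---is the purely random term $\widetilde{\bm{x}}^{\rm H}\bm{A}\widetilde{\bm{y}}$, which is bilinear in \emph{two} independent fluctuation vectors rather than quadratic in one, so Lemma~\ref{L3} does not apply verbatim. Here I would first note that, conditioning on $\widetilde{\bm{x}}$ (and $\bm{A}$), $\mathcal{E}\{\widetilde{\bm{x}}^{\rm H}\bm{A}\widetilde{\bm{y}}\mid\widetilde{\bm{x}}\}=\widetilde{\bm{x}}^{\rm H}\bm{A}\,\mathcal{E}\{\widetilde{\bm{y}}\}=0$, so the term is mean-zero; then its second moment is $\mathcal{E}\{\widetilde{\bm{x}}^{\rm H}\bm{A}(\tfrac{1}{N}\bm{\Upsilon}_y)\bm{A}^{\rm H}\widetilde{\bm{x}}\}=\tfrac{1}{N^2}\text{Tr}\{\bm{A}\bm{\Upsilon}_y\bm{A}^{\rm H}\bm{\Upsilon}_x\}\le\tfrac{c_A^2c_\Upsilon^2}{N}=\textsf{O}(1/N)$, using $\text{Tr}\{\bm{B}\}\le N\|\bm{B}\|$ for an $N\times N$ matrix. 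Equivalently, one may invoke the standard bilinear form of the trace lemma from random matrix theory (the companion of the quadratic-form lemma underlying Lemma~\ref{L3}), whose hypotheses---$\bm{A}$ of bounded spectral norm and independent of the i.i.d.\ vectors $\bm{z}_x,\bm{z}_y$---are met after whitening; Gaussianity again upgrades mean-square to almost-sure convergence via a summable fourth-moment bound. Collecting the four contributions gives $\bm{x}^{\rm H}\bm{A}\bm{y}-\text{Tr}\{\tfrac{1}{N}\bm{M}_{xy}\bm{A}\}\to 0$ as $N\to\infty$, which is the claim. The only delicate points are (i) keeping all constants uniform in $N$, which is guaranteed by the uniformly bounded spectral norms and the $\textsf{O}(N)$ growth of $\|\bm{m}_x\|^2,\|\bm{m}_y\|^2$ implicit in the normalization $\tfrac{1}{\sqrt{N}}\bm{m}$, and (ii) fixing the mode of convergence consistently with how the lemma is applied in the achievable-rate analysis.
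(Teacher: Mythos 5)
Your proposal is correct and follows essentially the same route as the paper's own proof: decompose $\bm{x}$ and $\bm{y}$ into mean plus zero-mean fluctuation, identify $\tfrac{1}{N}\bm{m}_x^{\rm H}\bm{A}\bm{m}_y=\text{Tr}\{\tfrac{1}{N}\bm{M}_{xy}\bm{A}\}$, and show the two cross terms and the doubly-random bilinear term vanish as $N\to\infty$. The only difference is cosmetic: where the paper invokes Lemma~1 of \cite{fernandes2013inter} for the cross terms and the trace lemma of \cite{hoydis2012random} for $\widetilde{\bm{x}}^{\rm H}\bm{A}\widetilde{\bm{y}}$, you establish the same facts by explicit $\textsf{O}(1/N)$ second-moment bounds (with the standard Chebyshev/Borel--Cantelli upgrade), which is a valid and slightly more self-contained substitute.
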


\begin{proof}
 Let $\bm{z}_{x}=\sqrt{N}\bm{x}-\bm{m}$. Since $\bm{x}\sim \mathcal{CN}\left(\frac{1}{\sqrt{N}}\bm{m}_x,
 \frac{1}{N}\bm{\Upsilon}_x\right)$, $\bm{z}_x\sim \mathcal{CN}\left(\bm{0}_N,\bm{\Upsilon}_x\right)$.
 Let $\bm{z}_y=\sqrt{N}\bm{y}-\bm{m}$. As $\bm{y}\sim \mathcal{CN}\left(\frac{1}{\sqrt{N}}\bm{m}_y,
 \frac{1}{N}\bm{\Upsilon}_y\right)$, $\bm{z}_x\sim \mathcal{CN}\left(\bm{0}_N,\bm{\Upsilon}_y\right)$.
 Furthermore,
\begin{align}\label{eA5}
 \bm{x}^{\rm H}\bm{A}\bm{y} =& \left(\frac{1}{\sqrt{N}}\bm{m}_x + \frac{1}{\sqrt{N}}\bm{z}_x\right)^{\rm H}
  \bm{A}\left(\frac{1}{\sqrt{N}}\bm{m}_y + \frac{1}{\sqrt{N}}\bm{z}_y\right) \nonumber \\
 =& \frac{1}{N}\bm{m}_x^{\rm H}\bm{A}\bm{m}_y + \frac{1}{N}\bm{z}_x^{\rm H}\bm{A}\bm{z}_y
  + \frac{1}{N}\bm{m}_x^{\rm H}\bm{A}\bm{z}_y + \frac{1}{N}\bm{z}_x^{\rm H}\bm{A}\bm{m}_y .
\end{align}
 Since $\bm{z}_x\sim \mathcal{CN}\left(\bm{0}_N,\bm{\Upsilon}_x\right)$ and $\bm{z}_y\sim
 \mathcal{CN}\left(\bm{0}_N,\bm{\Upsilon}_y\right)$, $\bm{z}_x$ and $\bm{z}_y$ do not
 depend on $\bm{m}_x$ and $\bm{m}_y$. According to Lemma~1 of \cite{fernandes2013inter}, we have
\begin{align}
 \lim\limits_{N\to \infty} \frac{\bm{m}_x^{\rm H}\bm{A}\bm{z}_y}{N} =& 0 , \label{eA6} \\
 \lim\limits_{N\to \infty} \frac{\bm{z}_x^{\rm H}\bm{A}\bm{m}_y}{N} =& 0 . \label{eA7}
\end{align}
 Since $\bm{z}_x$ and $\bm{z}_y$ are independent, according to the trace
 lemma of \cite{hoydis2012random}, we have
\begin{align}\label{eA8}
 \lim\limits_{N\to \infty} \frac{1}{N}\bm{z}_x^{\rm H}\bm{A}\bm{z}_y = 0 .
\end{align}
 Furthermore,
\begin{align}\label{eA9}
 \frac{1}{N}\bm{m}_x^{\rm H}\bm{A}\bm{m}_y = \text{Tr}\left\{\frac{1}{N}\bm{A}\bm{M}_{xy}\right\} .
\end{align}
 Taking the limit $N\to \infty$ as well as substituting (\ref{eA6}) to (\ref{eA9}) into (\ref{eA5})
 results in (\ref{eA4}).
\end{proof}

\subsection{Derivation of the Optimal Regularization Parameter}\label{Apb}

 Because the term $\mathcal{E}\bigg\{\sum\limits_{a=1}^A \Big\|\frac{1}{N_t}
 \sqrt{\frac{P_{r,b^*}^a}{P_{r,b^*}^{a^*}}}\bm{H}_{b^*}^a\bm{V}_{b^a}^a \bm{X}^a\Big\|^2
 + \Big\|\frac{1}{N_t\sqrt{P_{r,b^*}^{a^*}}}\bm{W}_{b^*}\Big\|^2\bigg\}$ is
 independent of $\xi_{b^*}^{a^*}$, 
\begin{align}\label{eB1}
 \frac{d\, \mathcal{J}\big(\xi_{b^*}^{a^*}\big)}{d\, \xi_{b^*}^{a^*}} =&
  \mathcal{E}\Bigg\{\bigg(\frac{1}{N_t}\bm{H}_{b^*}^{a^*}\bm{V}_{b^*}^{a^*} \bm{X}^{a^*} -
  \bm{X}^{a^*}\bigg)^{\rm H}\frac{d\, \big(\frac{1}{N_t}\bm{H}_{b^*}^{a^*}\bm{V}_{b^*}^{a^*}
  \bm{X}^{a^*} - \bm{X}^{a^*}\big)}{d\, \xi_{b^*}^{a^*}} \nonumber \\	
 &\hspace*{-15mm} + \frac{d\, \big(\frac{1}{N_t}\bm{H}_{b^*}^{a^*}\bm{V}_{b^*}^{a^*} \bm{X}^{a^*}  -
  \bm{X}^{a^*}\big)^{\rm H}}{d\, \xi_{b^*}^{a^*}} \Big(\frac{1}{N_t}\bm{H}_{b^*}^{a^*}
  \bm{V}_{b^*}^{a^*} \bm{X}^{a^*}  - \bm{X}^{a^*}\Big)\Bigg\} \nonumber \\
&\hspace*{-15mm} = \mathcal{E}\bigg\{\bigg(\frac{1}{N_t}\bm{H}_{b^*}^{a^*}\bm{V}_{b^*}^{a^*} \bm{X}^{a^*}
  - \bm{X}^{a^*}\bigg)^{\rm H}\bigg(- \frac{1}{N_t}\bm{H}_{b^*}^{a^*}\Big(\big(\bm{\Upsilon}_{b^*}^{a^*}\big)^2
  \big(\widehat{\bm{H}}_{b^*}^{a^*}\big)^{\rm H}\Big) \bm{X}^{a^*}\bigg) \nonumber \\
 &\hspace*{-15mm} + \bigg( -\frac{1}{N_t}\bm{H}_{b^*}^{a^*}\Big(\big(\bm{\Upsilon}_{b^*}^{a^*}\big)^2
  \big(\widehat{\bm{H}}_{b^*}^{a^*}\big)^{\rm H}\Big) \bm{X}^{a^*}\bigg)^{\rm H}\bigg(\frac{1}{N_t}
  \bm{H}_{b^*}^{a^*}\bm{V}_{b^*}^{a^*} \bm{X}^{a^*} - \bm{X}^{a^*}\bigg)\bigg\} \nonumber \\
&\hspace*{-15mm} = \mathcal{E}\bigg\{\!\! -\! \frac{1}{N_t^2}\text{Tr}\Big\{\! E_{\rm s}\widehat{\bm{H}}_{b^*}^{a^*}
  \bm{\Upsilon}_{b^*}^{a^*} \big(\bm{H}_{b^*}^{a^*}\big)^{\rm H} \bm{H}_{b^*}^{a^*}
  \big(\bm{\Upsilon}_{b^*}^{a^*}\big)^2 \big(\widehat{\bm{H}}_{b^*}^{a^*}\big)^{\rm H} \! \Big\}
  \! +\! \frac{1}{N_t}\text{Tr}\Big\{\! E_{\rm s}\bm{H}_{b^*}^{a^*}\big(\bm{\Upsilon}_{b^*}^{a^*}\big)^2\!
  \big(\widehat{\bm{H}}_{b^*}^{a^*}\big)^{\rm H}\! \Big\} \nonumber \\
 &\hspace*{-15mm} \!\! - \! \frac{1}{N_t^2}\text{Tr}\Big\{\! E_{\rm s}\widehat{\bm{H}}_{b^*}^{a^*}
  \big(\bm{\Upsilon}_{b^*}^{a^*}\big)^2
  \big(\bm{H}_{b^*}^{a^*}\big)^{\rm H} \bm{H}_{b^*}^{a^*}\bm{\Upsilon}_{b^*}^{a^*}
  \big(\widehat{\bm{H}}_{b^*}^{a^*}\big)^{\rm H} \! \Big\}	
  \! +\! \frac{1}{N_t}\text{Tr}\Big\{\! E_{\rm s}\widehat{\bm{H}}_{b^*}^{a^*}\big(\bm{\Upsilon}_{b^*}^{a^*}\big)^2
  \big(\bm{H}_{b^*}^{a^*}\big)^{\rm H}\! \Big\} \!\! \bigg\} \! ,\!
\end{align}
 where $E_{\rm s}=\mathcal{E}\big\{\big|X_{n_r}^{a^*}\big|^2\big\}$, $1\le n_r\le N_r$.
 Setting $\frac{d\, \mathcal{J}\big(\xi_{b^*}^{a^*}\big)}{d\, \xi_{b^*}^{a^*}}=0$ and
 followed by some further operations yields
\begin{align}\label{eB2}
 & \text{Tr}\bigg\{\bigg(\frac{1}{N_t}\widetilde{\bm{\Xi}}_{b^*}^{a^*} - \xi_{b^*}^{a^*}\bm{I}_{N_t}
  \bigg)\bm{\Upsilon}_{b^*}^{a^*}\big(\widehat{\bm{H}}_{b^*}^{a^*}\big)^{\rm H}\widehat{\bm{H}}_{b^*}^{a^*}
 \big(\bm{\Upsilon}_{b^*}^{a^*}\big)^2 \bigg\} = 0,
\end{align}
 where $\widetilde{\bm{\Xi}}_{b^*}^{a^*}=\text{diag}\big\{\widetilde{\varphi}_1,\cdots ,
 \widetilde{\varphi}_{N_t}\big\}$. This proves that (\ref{eq70}) is an optimal regularization parameter.

\end{document}